\documentclass[lettersize,journal]{IEEEtran}
%%%%% NEW MATH DEFINITIONS %%%%%

\usepackage{amsmath,amsfonts,bm}

% Mark sections of captions for referring to divisions of figures

% Highlight a newly defined term

% Figure reference, lower-case.

% Figure reference, capital. For start of sentence

% Section reference, lower-case.

% Section reference, capital.

% Reference to two sections.

% Reference to three sections.

% Reference to an equation, lower-case.
\def\eqref#1{equation~\ref{#1}}
% Reference to an equation, upper case

% A raw reference to an equation---avoid using if possible

% Reference to a chapter, lower-case.

% Reference to an equation, upper case.

% Reference to a range of chapters

% Reference to an algorithm, lower-case.

% Reference to an algorithm, upper case.

% Reference to a part, lower case

% Reference to a part, upper case

\def\1{\bm{1}}

% Random variables

% rm is already a command, just don't name any random variables m

% Random vectors

% Elements of random vectors

% Random matrices

% Elements of random matrices

% Vectors

% Elements of vectors

% Matrix

% Tensor
\DeclareMathAlphabet{\mathsfit}{\encodingdefault}{\sfdefault}{m}{sl}
\SetMathAlphabet{\mathsfit}{bold}{\encodingdefault}{\sfdefault}{bx}{n}

% Graph

% Sets

% Don't use a set called E, because this would be the same as our symbol
% for expectation.

% Entries of a matrix

% entries of a tensor
% Same font as tensor, without \bm wrapper

% The true underlying data generating distribution

% The empirical distribution defined by the training set

% The model distribution

% Stochastic autoencoder distributions

 % Laplace distribution

% Wolfram Mathworld says $L^2$ is for function spaces and $\ell^2$ is for vectors
% But then they seem to use $L^2$ for vectors throughout the site, and so does
% wikipedia.

 % See usage in notation.tex. Chosen to match Daphne's book.

\usepackage{amsthm}

% Define a custom theorem style without italics
\newtheoremstyle{nonitalictheorem}
  {\topsep}         % Space above
  {\topsep}         % Space below
  {\normalfont}     % Font used for the theorem body
  {}                % Indent amount
  {\bfseries}       % Font used for the theorem head
  {.}               % Punctuation after theorem head
  {.5em}            % Space after theorem head
  {}                % Theorem head specification

\usepackage{hyperref}
\usepackage{url}
\usepackage{amsmath}
\usepackage{amsfonts}
\usepackage{latexsym}
\usepackage{enumerate}
\usepackage{wasysym}
\usepackage{lipsum}
\usepackage{subfig}
\usepackage{graphicx}
\usepackage{float}
\usepackage{amssymb}
\usepackage{enumitem}
\usepackage{amsthm}
\newtheorem{theorem}{Theorem}
\newtheorem{assumption}{Assumption}

\newtheorem{Corollary}{Corollary}
\theoremstyle{nonitalictheorem}
\newtheorem{Remark}{Remark}
\usepackage{wrapfig,commath}
\usepackage{color}
\usepackage{algorithm}
\usepackage{algorithmic}
\usepackage{booktabs}
\usepackage{multirow}
\usepackage{array}
\allowdisplaybreaks[4]
% If IEEEtran.cls has not been installed into the LaTeX system files,
% manually specify the path to it like:
% \documentclass[10pt,journal,compsoc]{../sty/IEEEtran}

% For Computer Society journals, IEEEtran defaults to the use of 
% Palatino/Palladio as is done in IEEE Computer Society journals.
% To go back to Times Roman, you can use this code:
%\renewcommand{\rmdefault}{ptm}\selectfont

% Some very useful LaTeX packages include:
% (uncomment the ones you want to load)

% *** MISC UTILITY PACKAGES ***
%
%\usepackage{ifpdf}
% Heiko Oberdiek's ifpdf.sty is very useful if you need conditional
% compilation based on whether the output is pdf or dvi.
% usage:
% \ifpdf
%   % pdf code
% \else
%   % dvi code
% \fi
% The latest version of ifpdf.sty can be obtained from:
% http://www.ctan.org/pkg/ifpdf
% Also, note that IEEEtran.cls V1.7 and later provides a builtin
% \ifCLASSINFOpdf conditional that works the same way.
% When switching from latex to pdflatex and vice-versa, the compiler may
% have to be run twice to clear warning/error messages.

% *** CITATION PACKAGES ***
%
\ifCLASSOPTIONcompsoc
  % The IEEE Computer Society needs nocompress option
  % requires cite.sty v4.0 or later (November 2003)
  \usepackage[nocompress]{cite}
\else
  % normal IEEE
  \usepackage{cite}
\fi
% cite.sty was written by Donald Arseneau
% V1.6 and later of IEEEtran pre-defines the format of the cite.sty package
% \cite{} output to follow that of the IEEE. Loading the cite package will
% result in citation numbers being automatically sorted and properly
% "compressed/ranged". e.g., [1], [9], [2], [7], [5], [6] without using
% cite.sty will become [1], [2], [5]--[7], [9] using cite.sty. cite.sty's
% \cite will automatically add leading space, if needed. Use cite.sty's
% noadjust option (cite.sty V3.8 and later) if you want to turn this off
% such as if a citation ever needs to be enclosed in parenthesis.
% cite.sty is already installed on most LaTeX systems. Be sure and use
% version 5.0 (2009-03-20) and later if using hyperref.sty.
% The latest version can be obtained at:
% http://www.ctan.org/pkg/cite
% The documentation is contained in the cite.sty file itself.
%
% Note that some packages require special options to format as the Computer
% Society requires. In particular, Computer Society  papers do not use
% compressed citation ranges as is done in typical IEEE papers
% (e.g., [1]-[4]). Instead, they list every citation separately in order
% (e.g., [1], [2], [3], [4]). To get the latter we need to load the cite
% package with the nocompress option which is supported by cite.sty v4.0
% and later.

% *** GRAPHICS RELATED PACKAGES ***
%
\ifCLASSINFOpdf
  % \usepackage[pdftex]{graphicx}
  % declare the path(s) where your graphic files are
  % \graphicspath{{../pdf/}{../jpeg/}}
  % and their extensions so you won't have to specify these with
  % every instance of \includegraphics
  % \DeclareGraphicsExtensions{.pdf,.jpeg,.png}
\else
  % or other class option (dvipsone, dvipdf, if not using dvips). graphicx
  % will default to the driver specified in the system graphics.cfg if no
  % driver is specified.
  % \usepackage[dvips]{graphicx}
  % declare the path(s) where your graphic files are
  % \graphicspath{{../eps/}}
  % and their extensions so you won't have to specify these with
  % every instance of \includegraphics
  % \DeclareGraphicsExtensions{.eps}
\fi
\hyphenation{op-tical net-works semi-conduc-tor}

\begin{document}
%
% paper title
% Titles are generally capitalized except for words such as a, an, and, as,
% at, but, by, for, in, nor, of, on, or, the, to and up, which are usually
% not capitalized unless they are the first or last word of the title.
% Linebreaks \\ can be used within to get better formatting as desired.
% Do not put math or special symbols in the title.
\title{Accelerating Hybrid Federated Learning Convergence under Partial Participation}

\author{Jieming Bian, Lei Wang, Kun Yang, Cong Shen,~\IEEEmembership{Senior Member,~IEEE,}
    Jie Xu,~\IEEEmembership{Senior Member,~IEEE}% <-this % stops a space
\thanks{Jieming Bian, Lei Wang and Jie Xu are with the Department of Electrical and Computer Engineering, University of Miami, Coral Gables,
FL 33146, USA. Email: \{jxb1974, lxw725, jiexu\}@miami.edu. 
}
\thanks{Kun Yang and Cong Shen are with the Department of Electrical and Computer Engineering, University of Virginia, Charlottesville, VA 22904, USA. Email: \{ky9tc, cong\}@virginia.edu. 
}

}

\maketitle

\begin{abstract}
Over the past few years, Federated Learning (FL) has become a popular distributed machine learning paradigm. FL involves a group of clients with decentralized data who collaborate to learn a common model under the coordination of a centralized server, with the goal of protecting clients' privacy by ensuring that local datasets never leave the clients and that the server only performs model aggregation. However, in realistic scenarios, the server may be able to collect a small amount of data that approximately mimics the population distribution and has stronger computational ability to perform the learning process, resulting in the development of a hybrid FL framework. While previous hybrid FL work has shown that the alternative training of clients and server can increase convergence speed, it has focused on the scenario where clients fully participate and ignores the negative effect of partial participation. In this paper, we provide theoretical analysis of hybrid FL under clients' partial participation to validate that partial participation is the key constraint on the convergence speed. We then propose a new algorithm called FedCLG, which investigates the two-fold role of the server in hybrid FL. Firstly, the server needs to process the training steps using its small amount of local datasets. Secondly, the server's calculated gradient needs to guide the participating clients' training and the server's aggregation. We validate our theoretical findings through numerical experiments, which show that FedCLG outperforms state-of-the-art methods.
\end{abstract}

% Note that keywords are not normally used for peerreview papers.
\begin{IEEEkeywords}
Federated Learning, Convergence Analysis, Server-Clients Collaboration
\end{IEEEkeywords}

% make the title area
\maketitle

\section{Introduction}
% In recent years, technological advancements have led to an exponential growth in data collection, which has resulted in the development of stronger machine learning models \cite{sarker2021machine}. However, traditional centralized machine learning algorithms are not well-suited to handle this type of data, which is often distributed across multiple clients \cite{verbraeken2020survey}, such as mobile devices. As a solution to this problem, Federated Learning (FL) \cite{mcmahan2017communication} has emerged as an important paradigm in modern machine learning.

% FL is a distributed machine learning paradigm where a number of clients with decentralized data work collaboratively to learn a common model under the coordination of a centralized server. It has been extensively studied in the past few years, and offers several attractive properties, such as improved user data privacy \cite{xu2019hybridalpha, mothukuri2021survey}, scalability to new clients and datasets \cite{bonawitz2019towards}, and a speedup in model convergence rate \cite{yu2019linear, yang2021achieving, bian2022mobility}. However, despite these benefits, current FL works often relegate the server to performing only simple computations, such as aggregating local models. This can be a significant waste of the server's resources, as it is typically much more powerful than the clients in terms of computation resources.

Recent years have seen exponential growth in data collection due to technological advancements, leading to the development of stronger machine learning models \cite{sarker2021machine}. However, traditional centralized machine learning algorithms struggle with handling the distributed nature of this type of data, which is often spread across multiple clients, such as mobile devices \cite{verbraeken2020survey}. To overcome this problem, Federated Learning (FL) \cite{mcmahan2017communication} has emerged as an important paradigm in modern machine learning. FL is a distributed machine learning approach where clients with decentralized data collaborate to learn a common model under the coordination of a parameter server. It has several advantages, including enhanced user data privacy \cite{xu2019hybridalpha, mothukuri2021survey}, scalability to new clients and datasets \cite{bonawitz2019towards}, and faster model convergence rate \cite{yu2019linear, yang2021achieving, bian2022mobility}. Despite these benefits, current FL systems typically assign the server to only simple computations, such as aggregating local models, wasting its powerful computational resources. Moreover, traditional FL assumes that datasets are exclusively available at the clients, either independently and identically distributed (IID) or non-IID. However, this is not always the case in real-world scenarios. In many cases, the entity building the machine learning model operates the server and possesses a small amount of data that approximately mimics the overall population distribution. Although a machine learning model can be trained based solely on the server data, the model performance will be limited by the size of the server dataset. Thus, a hybrid FL approach, which collaboratively utilizes the massive client data and a small amount of server data in a decentralized and privacy-preserving manner is of paramount practical importance to boost model performance.

Compared to traditional FL, which assumes that only clients can access data while the server can only perform model aggregation, the literature on hybrid FL is relatively scarce. Authors in \cite{augenstein2022mixed} make the assumption that the data collected by the server is complementary to the data held by each client. However, this assumption may only be applicable to specific scenarios, and in most real-world cases, the entity operating the server is likely to have access to a small amount of data that can approximate the population data distribution. \textcolor{black}{To address this issue, this paper adopts a similar setting to \cite{10001832}, which proposes a hybrid model training design called CLG-SGD (short for cascading
local-global SGD). In this design, the server performs aggregate-then-advance training. The empirical findings presented in \cite{10001832} demonstrate that compared to client-only local data training (e.g., Local-SGD), CLG-SGD enhances the convergence speed. However, its theoretical analysis bounds server-side and client-side updates separately, failing to comprehensively represent the theoretical benefits of additional server training in non-convex settings. Moreover, \cite{10001832} mainly focuses on the IID and fully-participated scenario, which may not be realistic in practical applications. In reality, clients may choose to participate only when they have access to a reliable Wi-Fi connection and a power source \cite{mcmahan2017communication}. Therefore, only a small percentage of clients may participate in each round. Additionally, data is distributed across multiple clients (e.g., mobile devices), each with its own unique data distribution \cite{zhao2018federated}. Thus, the investigation of hybrid FL under non-IID and partial participation scenarios is crucial.}

% Furthermore, \cite{10001832} solely focuses on the alternative training steps between client training and server training. As a result, the server training only serves as an additional step to accelerate the overall training process, and it overlooks the potential advantages of using the server gradient to correct either the client training or the server aggregation.

\begin{figure}[t]
	\centering	\includegraphics[width=0.99\linewidth]{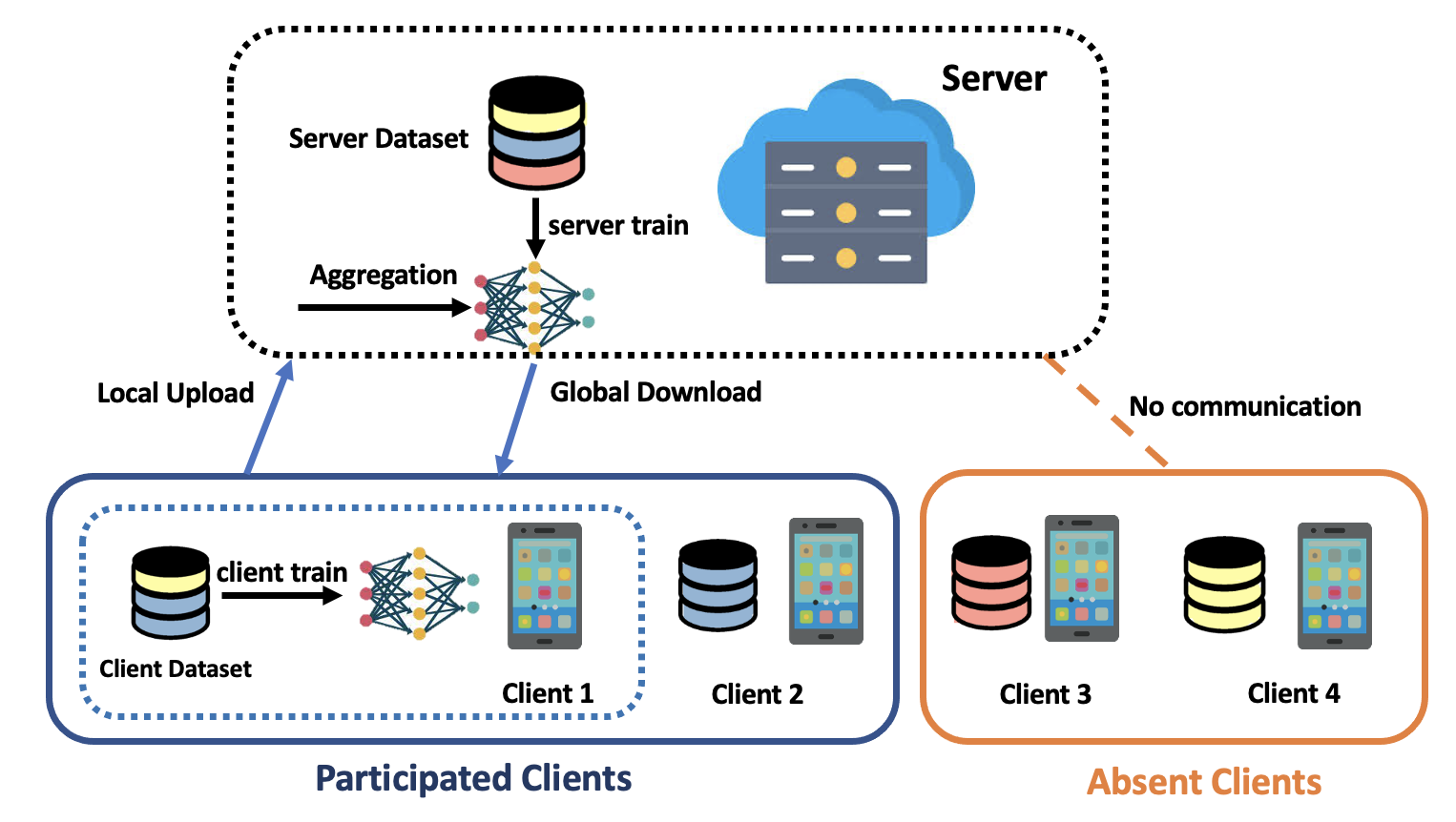}
	\caption{Illustration of a Communication Round in Hybrid FL with Non-IID Client Data and Partial Participation. } \label{fig:Hybrid_fl}
\vspace{-0.2in}
\end{figure}

\textcolor{black}{In this paper, we first revisit CLG-SGD \cite{10001832} under non-IID and partial participation setting, and introduce a novel convergence analysis. This analysis improves the convergence rate of CLG-SGD, demonstrating how additional server-side training can expedite convergence. However, our findings also indicate that partial participation errors can still impede CLG-SGD's convergence rate, even with augmented server training. To mitigate this issue, we then introduce FedCLG (Federated cascading
local-global learning), a new algorithm for hybrid FL that leverages server training to improve model convergence speed and correct partial participation errors in non-IID and partially participated scenarios (See Fig. \ref{fig:Hybrid_fl}). Specifically, FedCLG has two main responsibilities for server training. First, the server training starts with the latest aggregated global model and advances it using its limited local dataset. This allows the server to contribute to the global model with its advanced computation capabilities. Second, the server conducts an additional one-round training before broadcasting the new global model to each participating client. The gradient computed during this additional server training is utilized to correct partial participation errors. We propose two versions of FedCLG based on where partial participation errors are corrected: FedCLG-S and FedCLG-C. FedCLG-S corrects partial participation errors during server model aggregation, while FedCLG-C corrects them at each client's side during local training. Our proposed algorithm FedCLG aims to maximize the benefits of server training to improve model convergence speed and correct partial participation errors in non-IID and partially participated scenarios. We summarize our main contributions below:}

\begin{itemize}
    \item We provide a novel theoretical convergence analysis of the state-of-the-art hybrid FL method, CLG-SGD, that validates the benefit of additional server training without requiring the assumption of IID data or full client participation. Our analysis highlights that, despite the additional server training, convergence speed is still limited by partial participation errors.
    \item We propose FedCLG, a new algorithm that maximizes the potential benefits of server training in hybrid FL. We introduce two versions of FedCLG, FedCLG-S and FedCLG-C, to account for different communication and computation scenarios. We provide theoretical convergence analysis for both versions.
    \item We conduct extensive experiments on \textcolor{black}{three} datasets, demonstrating the superior performance of FedCLG over existing state-of-the-art methods.
\end{itemize}

The remainder of this paper is organized as follows. Related works are surveyed in Section II. The system model and extensive theoretical analysis of state-of-the-art hybrid FL method are presented in Section III. FedCLG is detailed in Section IV and its two version FedCLG-S and FedCLG-C are analyzed in Section V. Experiment results are reported in Section VII. Finally, Section VIII concludes the paper. All technical proofs can be found in the Appendices.

\section{Related Works}
\subsection{Federated Learning}
With the growing demand for local data storage and on-device model training, Federated Learning \cite{konevcny2016federated, liu2020deep, liu2020federated} has attracted significant interest in recent years. The Federated Averaging algorithm (FedAvg), first proposed by \cite{mcmahan2017communication}, operates by periodically averaging local Stochastic Gradient Descent (SGD) updates. This work has inspired numerous follow-up studies focusing on FL with IID client datasets and full client participation \cite{stich2018local, stich2019error, wang2021cooperative, cao2020fltrust}. Under the assumptions of complete participation and IID client datasets, several theoretical works \cite{Yu2018ParallelRS, stich2018local} have emerged, providing a linear speedup convergence guarantee, on par with the rate of parallel SGD \cite{zhang2016parallel}. However, real-world scenarios often present challenges in FL due to non-IID data and partial client participation \cite{zhu2021federated}. Recent works \cite{karimireddy2020scaffold, yang2021achieving, wang2022unified, li2022federated, li2023honest} have addressed these issues by offering similar convergence rates under non-IID and partial participation settings.

\subsection{Hybrid Federated Learning}
The majority of existing FL research focuses on the scenario where data is exclusively stored on the client side, and the server is only responsible for the aggregation step, ensuring clients' privacy requirements are met. However, this approach could potentially underutilize the server's computational capabilities. Compared to the clients, which are typically mobile devices in FL settings, the server generally possesses significantly greater computational power \cite{bahl2012advancing}. This has led to the emergence of a new FL configuration, referred to as hybrid FL. Current hybrid FL research can be divided into two categories, based on the source of the server dataset.

The first category of hybrid FL assumes that the server cannot collect data independently, while clients with limited computational resources can upload less privacy-sensitive data samples to the server to aid training \cite{elbir2022hybrid, huang2022wireless}. These studies concentrate on optimizing the trade-off between data sample communication costs and the benefits of model training.

The second line of hybrid FL, more closely related to this paper, assumes that the server can collect a small portion of the total data samples \cite{10001832, augenstein2022mixed}. While \cite{augenstein2022mixed} posits that the server's data complements each client's data, a more realistic assumption is that the server is more likely to gather a small amount of data that approximates the population data distribution \cite{jeong2020federated}. Our work adopts a similar setting to \cite{10001832}. However, while \cite{10001832} focuses on IID data and full client participation, our research investigates the more realistic scenario of non-IID client data and partial client participation.

\subsection{Variance Reduction}
Variance reduction has been a widely studied concept across various fields. Monte Carlo sampling methods employ the control variates technique to reduce variance \cite{glasserman2004monte}, while in stochastic gradient estimates for large-scale machine learning, SVRG \cite{johnson2013accelerating} and SAG \cite{reddi2016stochastic} have been introduced to reduce the stochastic sampling-variance. SAG was later simplified, leading to the proposal of SAGA \cite{defazio2014saga}. In Federated Learning (FL), the variance caused by randomly participating clients has a stronger impact than the variance caused by stochastically selected data samples. As a result, variance reduction methods in FL focus more on reducing client-variance. SCAFFOLD \cite{karimireddy2020scaffold}, an extension of SAGA, was proposed as the first variance reduction method in FL, which inspired subsequent works such as \cite{jhunjhunwala2022fedvarp, gu2021fast, acar2021federated, zhang2021fedpd} that attempt to reduce client-variance to increase convergence speed. However, none of these methods consider the hybrid FL setting and can result in the misutilization of stale information. In this work, we propose the first approach to reducing client-variance in hybrid FL, which fully exploits the benefits of server-side small datasets. A detailed comparison between our method and existing variance-reduction FL approaches is presented in Section IV.

\section{Problem Formulation and CLG-SGD}
\subsection{Problem Formulation}
In the hybrid federated learning setting, we aim to optimize the model parameters $x \in \mathcal{R}^d$ by minimizing the global objective function $f(x)$, similar to traditional federated learning frameworks. The global objective function is defined as:

\begin{align}
\label{c_opt}
\min_x f(x) = \frac{1}{N}\sum_{i=1}^N f_i(x),
\end{align}

where $f_i(x) = \frac{1}{m_i}\sum_{z \in \mathcal{D}_i} l(x,z)$ represents the local objective of client $i$ computed on their local dataset $\mathcal{D}_i$ with $m_i$ data points. The loss function is denoted by $l(.,.)$, and $z$ represents a data sample from the local dataset $\mathcal{D}_i$. The total data samples in the FL system are represented as $m$, such that $\sum_i^N m_i = m$, and the total number of clients is denoted as $N$. The underlying data distribution of the total $m$ data samples is denoted as $\mathcal{V}$. We assume, without loss of generality, that all $N$ clients' local objectives have equal weight in the global objective function (\ref{c_opt}). The algorithms and theoretical analysis can be easily extended to cases where client objectives are unequally weighted, such as proportional to the local data size.

In contrast to traditional FL, the hybrid FL framework posits that, in addition to the data available at each client, the server can collect a small dataset $\mathcal{D}_s^t$ with a constant size of $m_s$, \textcolor{black}{which is data homogeneous with the overall dataset.} Although the underlying data distribution of $\mathcal{D}_s^t$ remains consistent and approximates the overall population distribution $\mathcal{V}$, the dataset itself changes with each global round $t$ (while remaining fixed within the global round). Consequently, the server's optimization problem becomes:

\begin{align}
\label{s_opt}
\min_x f_s(x) = \frac{1}{m_s}\sum_{z \in \mathcal{D}_s^t} l(x,z).
\end{align}

However, because the size of $\mathcal{D}_s^t$ is considerably smaller than the overall dataset stored at each client (i.e., $m_s \ll m$), relying solely on $\mathcal{D}_s^t$ for model training could result in suboptimal outcomes. Additionally, the server's limited access to the fixed dataset for local training during specific time periods in each global round may significantly increase the training time.

Determining the best approach to utilize both server and clients' data for training and achieve optimal convergence performance is a challenging problem. To address this, we revisit the CLG-SGD algorithm in the hybrid FL setting. At each round $t$, the server randomly selects a subset of $M$ clients, denoted as $\mathcal{S}_t$, and sends the global model $x_t$ to these clients. Upon receiving $x_t$, each selected client $i$ performs $K$ rounds of local updates as follows:
\begin{align}
\label{c_local}
&x_{t,0}^i = x_t; \nonumber\\
&x_{t,k+1}^i = x_{t,k}^i - \eta g_{t,k}^i, \quad k = 0, \dots, K-1,
\end{align}
where $\eta$ is the client-side local learning rate, and $g_{t,k}^i = \nabla f_i(x_{t,k}^i, \zeta_i)$ is the stochastic gradient evaluated on a randomly drawn mini-batch $\zeta_i$ at client $i$ ($g_{t,k}^i = \nabla f_i(x_{t,k}^i)$ if a full gradient is used). After $K$ steps of local training, client $i$ sends back its update $\Delta_t^i = x_{t,K}^i - x_t$ to the server, which aggregates the updates to update the global model as follows:
\begin{align}
\label{s_aggregate}
x_{t+1}^s = x_t + \eta_g \frac{1}{M} \sum_{i\in \mathcal{S}_t} \Delta_t^i,
\end{align}
where $\eta_g$ is the global (aggregation) learning rate, and $x_{t+1}^s$ represents an intermediate stage between client local training and server local training. In classic FL, the iteration ends at this point. However, in hybrid FL, the server not only aggregates the clients' updates but also utilizes its own dataset $\mathcal{D}_s^t$ for server training. Thus, after aggregating the model $x_{t+1}^s$, the server also performs $E$ rounds of local updates as follows:
\begin{align}
\label{s_local}
&x_{t+1,0}^s = x_{t+1}^s; \nonumber\\
&x_{t+1,e+1}^s = x_{t+1,e}^s - \gamma g_{t+1,e}^s, \quad e = 0, \dots, E-1; \nonumber\\
&x_{t+1} = x_{t+1,E}^s,
\end{align}
where $\gamma$ is the server learning rate, and $g_{t,e}^s = \nabla f_s(x_{t,e}^s, \zeta_s)$ is the stochastic gradient evaluated on a randomly drawn mini-batch $\zeta_s$ from the server dataset $\mathcal{D}_s^t$ ($g_{t,e}^s = \nabla f_s(x_{t,e}^s)$ if a full gradient is used). After the additional model training at the server, the global model advances from $x_{t+1}^s$ to $x_{t+1}$. Then the server broadcasts the new global model $x_{t+1}$ for the next round of iteration. 

\textcolor{black}{In the previous work \cite{10001832}, the authors focus on IID and fully participating settings and fail to show how additional server training accelerates the convergence speed in non-convex settings. In this paper, we extend the analysis to non-IID and partial participation settings in the following subsection. Our novel theoretical analysis demonstrates that, although additional server training can improve the convergence rate, convergence speed is still dominated by partial participation error resulting from data heterogeneity and randomly selected clients. Based on this observation, we propose a new algorithm, FedCLG, in Section IV.}

\subsection{Novel Convergence Analysis of CLG-SGD \cite{10001832}}
\textcolor{black}{
For the theoretical analysis in this paper, we make the following assumptions: in each round, the server selects a subset of clients uniformly without replacement. In addition, our convergence analysis will utilize the following standard technical assumptions.
}
\begin{assumption}[Lipschitz Smoothness]\label{assm:smooth}
There exists a constant $L>0$ such that $\|\nabla f_i(x) - \nabla f_i(y)\| \leq L\|x- y\|$, $\forall x, y \in \mathbb{R}^d$ and $\forall i = 1, ..., N$. 
\end{assumption}

\begin{assumption}[Bounded Variance]\label{assm:unbiased-local}
The dataset $\mathcal{D}_s^t$ at the server approximates the overall population distribution $\mathcal{V}$, so the gradient calculated using $\mathcal{D}_s^t$ is an unbiased estimate of the global objective, i.e., $\mathbb{E}_{\mathcal{D}_{s}^t \sim \mathcal{V}}[\nabla f_{s}(x)] = \nabla f(x)$. Furthermore, there exists a constant $\sigma > 0$ such that the variance of the gradient estimator is bounded, i.e.,
\begin{align}
\mathbb{E}_{\mathcal{D}_{s}^t \sim \mathcal{V}}\left[\|\nabla f_{s}(x) - \nabla f(x)\|^2 \right] \leq \frac{\sigma^2}{m_s}, \forall x, \forall t.
\end{align}
where $m_s$ is the size of server dataset $\mathcal{D}_s^t$.
\end{assumption}

\textcolor{black}{
\begin{assumption}[Unbiased Gradient Estimate and Bounded Local Variance] \label{l_variance}
The stochastic gradient estimate is unbiased, i.e., $\mathbb{E}_\zeta[ F_i(x, \zeta)] = \nabla f_i(x)$, $\forall x$ and $\forall i = 1, \cdots, N$ and its variance is bounded $\mathbb{E}[\|\nabla F_i(x, \zeta_i) - \nabla f_i(x)\|^2] \leq \sigma_l^2$, $\forall x \in \mathbb{R}^d$ and $\forall i = 1, ..., N$.
\end{assumption}}

\begin{assumption} [Bounded Global Variance]\label{a_variance}
	There exists a constant number $\sigma_g > 0$ such that the variance between the local gradient of client $i$ and the global gradient is bounded as follows:
 \begin{align}
     \| \nabla f_i(x) - \nabla f(x) \|^2 \leq \sigma_{g}^2, ~\forall i \in [N], \forall x.
 \end{align}
\end{assumption}

Assumptions \ref{assm:smooth}, \ref{l_variance} and \ref{a_variance} are commonly adopted in the convergence analysis of FL under non-IID settings \cite{yang2021achieving, bian2022mobility, yu2019linear, Yu2018ParallelRS}. Assumption \ref{assm:unbiased-local} provides a bound on the variance introduced by server local training, which is dependent on the size of $\mathcal{D}_s^t$ \cite{prakash2021talk}. We here consider the size of $\mathcal{D}_s^t$ as a hyper-parameter.

\textcolor{black}{
\begin{theorem} \label{th1}
Suppose that client local learning rate $\eta$, global learning rate $\eta_g$, and server local learning rate $\gamma$ are chosen such that $\eta \leq \frac{1}{3KL}$, $\eta \eta_g \leq \frac{1}{27KL}$, and $\gamma \leq \frac{1}{6EL}$. Under Assumptions \ref{assm:smooth}, \ref{assm:unbiased-local}, \ref{l_variance}, \ref{a_variance}, suppose that in each round $t$ the server uniformly selects $M$ out of $N$ clients without replacement, the sequence of model vectors ${x_t}$ satisfies:
% \begin{align}
%     & \min_{t \in [T]} \mathbb{E}\|\nabla f(x_t)\|_2^2 \leq \frac{20(f_0 - f_*)}{T(8\gamma E + \eta\eta_g K)}\ + \frac{57\eta^3\eta_g L^2 K^3\sigma_g^2}{8\gamma E + \eta\eta_g K} \nonumber\\
%     & + \frac{32\gamma^2 E L \sigma^2}{m_s(8\gamma E + \eta\eta_g K)} +  \frac{72(N-M)K^2\eta^2\eta_g^2L\sigma_g^2}{M(N-1)(8\gamma E + \eta\eta_g K)},
% \end{align}
% where $f_0 = f(x_0)$, $f_* = f(x_*)$. 
\begin{align}
    & \min_{t \in [T]} \mathbb{E}\|\nabla f(x_t)\|_2^2 = \mathcal{O}\bigg(\frac{(f_0 - f_*)}{T(\gamma E + \eta\eta_g K)}\bigg) \nonumber\\
     & + \mathcal{O}\bigg(\frac{\eta^3\eta_g L^2 K^3\sigma_g^2}{\gamma E + \eta\eta_g K}\bigg)  +  \mathcal{O}\bigg(\frac{\gamma^2 E L \sigma^2}{m_s(\gamma E + \eta\eta_g K)} \bigg) \nonumber\\
     & +  \mathcal{O}\bigg(\frac{(N-M)K^2\eta^2\eta_g^2L\sigma_g^2}{M(N-1)(\gamma E + \eta\eta_g K)}\bigg)  + \mathcal{O}\bigg(\frac{\eta^3\eta_g L^2 K^2\sigma_l^2}{\gamma E + \eta\eta_g K}\bigg) \nonumber\\
     & +\mathcal{O}\bigg(\frac{\eta^2\eta_g^2LK \sigma_l^2}{M(\gamma E + \eta\eta_g K)}\bigg),
\end{align}
\end{theorem}}

\begin{proof}
The proof is shown in Appendix A.
\end{proof}
\textcolor{black}{
\begin{Remark}
The convergence bound presented above consists of six terms, with the second term accounting for the effect of client local training, the third term representing the impact of limited data availability at the server, the fourth term reflecting the influence of partial participation of clients, and the last two terms representing the error caused stochastic client updates.
\end{Remark}
}

\begin{Remark}
The third term is influenced by the number of data points available at the server, denoted by $m_s$. As $m_s$ increases, the convergence bound tightens, which aligns with the expectation that having more training data stored at the server should result in better convergence performance.
\end{Remark}
\textcolor{black}{
\begin{Corollary}
Let $\eta = \Theta(\frac{1}{K\sqrt{T}})$, $\eta_g = \Theta(\sqrt{MK})$ and $\gamma = \Theta(\frac{1}{\sqrt{ET}})$, the convergence rate of CLG-SGD becomes:
\begin{align}
     & \min_{t \in [T]} \mathbb{E}\|\nabla f(x_t)\|_2^2 = \mathcal{O}\bigg(\frac{\sqrt{MK}}{(\sqrt{MK}+\sqrt{E})T}\bigg) \nonumber\\
     & + \mathcal{O}\bigg(\frac{K}{(\sqrt{MK}+\sqrt{E})\sqrt{T}}\bigg) 
\end{align}
\end{Corollary}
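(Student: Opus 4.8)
The plan is to obtain the Corollary as a direct specialization of Theorem~\ref{th1}: substitute the prescribed schedule into the four-term bound and simplify, treating all problem constants ($L$, $\sigma$, $\sigma_g$, $m_s$ and $f_0 - f_*$, which are fixed hyper-parameters) as absorbed into the $\mathcal{O}(\cdot)$. First I would verify that the schedule is admissible, i.e. that it eventually satisfies the three stepsize constraints of Theorem~\ref{th1}. Substituting $\eta = \Theta(1/(K\sqrt{T}))$ shows $\eta \le 1/(3KL)$ once $T \gtrsim L^2$; using $\eta\eta_g = \Theta(\sqrt{M}/(\sqrt{K}\sqrt{T}))$ shows $\eta\eta_g \le 1/(27KL)$ once $T \gtrsim L^2 MK$; and $\gamma = \Theta(1/\sqrt{ET})$ shows $\gamma \le 1/(6EL)$ once $T \gtrsim E L^2$. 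Hence for all sufficiently large $T$ the hypotheses hold and the bound of Theorem~\ref{th1} is in force.

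The computational heart is the common factor $\gamma E + \eta\eta_g K$ that appears in every term. Substituting the schedule gives $\gamma E = \Theta(\sqrt{E}/\sqrt{T})$ and $\eta\eta_g K = \Theta(\sqrt{MK}/\sqrt{T})$, so that $\gamma E + \eta\eta_g K = \Theta((\sqrt{E}+\sqrt{MK})/\sqrt{T})$. I would then evaluate the numerator scalings one at a time: $\eta^3\eta_g K^3 = \Theta(\sqrt{MK}/T^{3/2})$ for the client-drift term, $\gamma^2 E = \Theta(1/T)$ for the server-variance term, and $\eta^2\eta_g^2 K^2 = \Theta(MK/T)$ for the partial-participation term. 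Dividing each numerator by the common factor (equivalently, multiplying by $\sqrt{T}/(\sqrt{E}+\sqrt{MK})$) is then routine.

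Carrying out these divisions, the initialization term $\frac{f_0 - f_*}{T(\gamma E + \eta\eta_g K)}$ becomes $\mathcal{O}(1/((\sqrt{MK}+\sqrt{E})\sqrt{T}))$, and the server-variance term $\frac{\gamma^2 E L \sigma^2}{m_s(\gamma E + \eta\eta_g K)}$ reduces to the identical order; these two collapse into the first term of the Corollary. The client-drift term yields $\mathcal{O}(\sqrt{MK}/((\sqrt{MK}+\sqrt{E})T))$, the second term. For the partial-participation term I would use $\frac{N-M}{M(N-1)} = \mathcal{O}(1/M)$, so that the $\Theta(MK/T)$ numerator scaling, after cancellation of the $M$, contributes an overall $\mathcal{O}(K/((\sqrt{MK}+\sqrt{E})\sqrt{T}))$, the third term.

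No deep idea is required here; the result is pure bookkeeping once Theorem~\ref{th1} is available. The parts that demand care are: (i) simplifying the sampling factor $\frac{N-M}{M(N-1)}$ to its $\Theta(1/M)$ scaling so that the extra $M$ in the numerator cancels; (ii) correctly tracking the powers of $K$ and $T$ through the products $\eta^3\eta_g K^3$ and $\eta^2\eta_g^2 K^2$, where a single slip changes the final exponent; and (iii) noticing that the initialization and server-variance terms coincide in order, so the final rate has three terms rather than four. I would therefore present the admissibility check and the evaluation of $\gamma E + \eta\eta_g K$ first, since both are reused in every subsequent simplification.
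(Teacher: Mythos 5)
Your proposal is correct and follows essentially the same route as the paper: the paper's own derivation (at the end of Appendix A) obtains the Corollary by substituting the schedule $\eta = \Theta(\frac{1}{K\sqrt{T}})$, $\eta_g = \Theta(\sqrt{MK})$, $\gamma = \Theta(\frac{1}{\sqrt{ET}})$ directly into the four-term bound of Theorem~\ref{th1}, with $\gamma E + \eta\eta_g K = \Theta\bigl((\sqrt{E}+\sqrt{MK})/\sqrt{T}\bigr)$ as the common denominator and the initialization and server-variance terms merging into one order. Your added admissibility check (that the schedule satisfies the step-size constraints for sufficiently large $T$) is a detail the paper leaves implicit, and all of your order computations, including the $\frac{N-M}{M(N-1)} = \mathcal{O}(1/M)$ cancellation, match the paper's result.
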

}

\begin{Remark}
Our convergence analysis of CLG-SGD is more general than that of \cite{10001832}, as we consider non-IID data and partial client participation. Additionally, our analysis demonstrates the theoretical benefits of additional server training even in the case of IID data and full participation, where \cite{10001832} fails to do so. Notably, in the case of IID data ($\sigma_g = 0$) or full participation ($M=N$), our convergence rate is $\mathcal{O}\bigg(\frac{1}{(\sqrt{MK}+\sqrt{E})\sqrt{T}}\bigg) + \mathcal{O}\bigg(\frac{\sqrt{MK}}{(\sqrt{MK}+\sqrt{E})T}\bigg)$, which converges faster than the rate of $\mathcal{O}\bigg(\frac{1}{\sqrt{MKT}}\bigg) + \mathcal{O}\bigg(\frac{1}{T}\bigg) $ found in \cite{10001832}.
\end{Remark}

\textcolor{black}{
\begin{Remark}
If we consider full client participation and set the server's local training epoch $E=0$, the hybrid FL approach becomes equivalent to classic FL, and the convergence speed degenerates to $\mathcal{O}\bigg(\frac{1}{\sqrt{MKT}}\bigg) + \mathcal{O}\bigg(\frac{1}{T}\bigg)$. This rate is the same as the state-of-the-art convergence rate found in classical FL \cite{karimireddy2020scaffold, yang2021achieving} without server training.
\end{Remark}
}

\begin{Remark}
The corollary reveals that the dominating factor in the convergence bound is $\mathcal{O}\bigg(\frac{K}{(\sqrt{MK}+\sqrt{E})\sqrt{T}}\bigg)$, which is closely related to the global variance $\sigma_g^2$. This suggests that the global variance has a more significant effect on convergence behavior in cases with partial participation, particularly in highly non-IID scenarios where $\sigma_g$ is substantial. Therefore, developing a new hybrid FL approach to mitigate the negative effects of partial participation in non-IID settings is a challenging task.
\end{Remark}

Our novel theoretical analysis of CLG-SGD suggests that hybrid FL can achieve faster convergence by incorporating additional server local training after the aggregation step. However, like classic FL, hybrid FL is still constrained by the convergence limitations caused by partial client participation in non-IID settings. Prior research, such as \cite{karimireddy2020scaffold, jhunjhunwala2022fedvarp}, has used variance reduction techniques in classic FL to mitigate the adverse effects of partial participation. Although these methods can be adapted to hybrid FL, they do not fully leverage the potential benefits of the small amount of server data. Therefore, it is critical to devise an algorithm that fully exploits the server data. In the next sections, we introduce FedCLG, a novel algorithm that fully exploits the server data, and we demonstrate its superior convergence performance.

\section{FedCLG}
In hybrid FL, a significant difference from the classical FL setting is the server's possession of its own local training dataset $\mathcal{D}_s^t$, which is a small subset approximating the overall population dataset. While using only the server dataset $\mathcal{D}_s^t$ to train a model has drawbacks, such as slower training speed and higher risk of reaching sub-optimal points, it can provide a more accurate direction of the global objective compared to using the large non-IID dataset stored at each client. The key innovation of FedCLG is the utilization of the server gradient to produce variance correction either at the server aggregation step (FedCLG-S) or client local training step (FedCLG-C). This correction helps address the issue of non-IID data distribution across clients and ultimately improves the FL model's accuracy. In this section, we provide further elaboration on the specific details of each version of FedCLG.

\subsection{FedCLG-C}
In FedCLG-C, the server randomly selects a subset of $M$ clients out of $N$ total clients, denoted as $\mathcal{S}_t$, at each round $t$. Prior to broadcasting the global model $x_t$ to the selected clients, the server conducts an additional local training step using its own local dataset $\mathcal{D}_s^t$ based on the global model $x_t$, producing a gradient denoted as $g_t^s$, where $g_t^s = \nabla f_s(x_t)$ represents the full batch gradient or $g_t^s = \nabla f_s(x_t,  \zeta_s)$ represents the stochastic gradient evaluated on a randomly drawn mini-batch $\zeta_s$ from the server dataset $\mathcal{D}_s^t$. FedCLG-C requires the server to broadcast both the global model $x_t$ and the gradient $g_t^s$ to each selected client $i \in \mathcal{S}_t$. Upon receiving the gradient $g_t^s$ and the model $x_t$, each client $i$ performs $K$ rounds of local epoch with the correction term $c_i$ as follows:
\begin{align}
    & c_i = g_t^s - g_t^i \nonumber\\
    &x_{t,0}^i = x_t; \nonumber\\
    &x_{t,k+1}^i = x_{t,k}^i - \eta (g_{t,k}^i + c_i), ~k = 0, \dots, K-1,
\label{FedCLG-C}
\end{align}
\textcolor{black}{Here, $g_{t}^i = \nabla f_i(x_{t}, \zeta_i)$ is the stochastic gradient evaluated on a randomly drawn mini-batch $\zeta_i$ at client $i$. ($g_{t}^i = \nabla f_i(x_{t})$ if a full gradient is used).} After completing $K$ rounds of training, each client $i$ sends its update to the server. The server then carries out the same aggregation and server's local training steps as described in Eqs. \ref{s_aggregate} and \ref{s_local}. We summarize FedCLG-C in Algorithm 1.

% \begin{algorithm}[h]
%     \caption{FedCLG-C}
%     \label{alg:FedCLG-C}
%     \begin{algorithmic}[1]
%         \STATE{Initial model $x_0$, client local learning rate $\eta$, global learning rate $\eta_g$, server local learning rate $\gamma$, number of client local epoch $K$, number of server local epoch $E$, number of global iterations $T$}
%         \FOR{$t = 0, 1, \dots, T-1$}
%         \STATE{Uniformly sample $\mathcal{S}_t$ clients without replacement}
%         \STATE{Compute a server gradient $g_t^s = \nabla f_s(x_t,  \zeta_s)$}
%         \STATE{\textit{Client Side:}}
%         \FOR{each client $i \in \mathcal{S}_t$ in parallel}
%             \STATE{Perform client local training as Eq. \ref{FedCLG-C}}
%         \ENDFOR
%         \STATE{\textit{Server Side:}}
%             \STATE{Aggregate the model $x_{t+1}^s$ as Eq. \ref{s_aggregate}}
%         \STATE{Perform server local training as Eq. \ref{s_local}}
%         \ENDFOR
%     \end{algorithmic}
% \end{algorithm}

\subsection{FedCLG-S}
In FedCLG-S, similar to FedCLG-C, the server selects a random subset of $M$ clients at each round $t$ and performs an extra training step based on $x_t$ using its own local dataset $\mathcal{D}_s^t$. The resulting training gradient $g_t^s$ is held by the server, which subsequently broadcasts the global model $x_t$ to the selected clients. Upon receipt of the global model, each selected client $i \in \mathcal{S}_t$ conducts $K$ steps of local training as specified by Eq. \ref{c_local} and calculates the client gradient $g_t^i$ based on the received global model. Each client sends back its cumulative local updates and local gradient $g_t^i$ to the server, which then aggregates the updates using the following formula:
\begin{align}
x_{t+1}^s = x_t + \eta_g \frac{1}{M} \sum_{i\in \mathcal{S}_t} (\Delta_t^i - K\eta(g_t^s - g_t^i)),
\label{FedCLG-S}
\end{align}
where $\eta_g$ is the learning rate for the server's local training. After this aggregation, the server performs $E$ epochs of local training as described in Eq. \ref{s_local}. The steps involved in FedCLG-S are summarized in Algorithm 1.

% \begin{figure}[h]
% 	\centering
% 	\includegraphics[width=0.95\linewidth]{figures/FedCLG-S.png}
% 	\caption{Illustration of FedCLG-S. } \label{fig:FedCLG-S}
% \vspace{-0.2in}
% \end{figure}

% \begin{algorithm}[h]
%     \caption{FedCLG-S}
%     \label{alg:FedCLG-S}
%     \begin{algorithmic}[1]
%         \STATE{Input here to be the same as in Algorithm \ref{alg:FedCLG-C}}
%         \FOR{$t = 0, 1, \dots, T-1$}
%         \STATE{Uniformly sample $\mathcal{S}_t$ clients without replacement}
%         \STATE{Compute a server gradient $g_t^s = \nabla f_s(x_t,  \zeta_s)$}
%         \STATE{\textit{Client Side:}}
%         \FOR{each client $i \in \mathcal{S}_t$ in parallel}
%             \STATE{Perform client local training as Eq. \ref{c_local}}
%         \ENDFOR
%         \STATE{\textit{Server Side:}}
%             \STATE{Aggregate the model $x_{t+1}^s$ as Eq. \ref{FedCLG-S}}
%         \STATE{Perform server local training as Eq. \ref{s_local}}
%         \ENDFOR
%     \end{algorithmic}
% \end{algorithm}

\begin{algorithm}[h]
    \caption{FedCLG}
    \label{alg:FedCLG}
    \begin{algorithmic}[1]
        \STATE{Initial model $x_0$, client local learning rate $\eta$, global learning rate $\eta_g$, server local learning rate $\gamma$, number of client local epoch $K$, number of server local epoch $E$, number of global iterations $T$}
        \FOR{$t = 0, 1, \dots, T-1$}
        \STATE{Uniformly sample $\mathcal{S}_t$ clients without replacement}
        \STATE{Compute a server gradient $g_t^s = \nabla f_s(x_t,  \zeta_s)$}
        \STATE{\textit{Client Side:}}
        \FOR{each client $i \in \mathcal{S}_t$ in parallel}
            \IF{FedCLG-C}
                \STATE{Perform client local training as Eq. \ref{FedCLG-C}}
            \ELSIF{FedCLG-S}
                \STATE{Perform client local training as Eq. \ref{c_local}}
            \ENDIF
        \ENDFOR
        \STATE{\textit{Server Side:}}
        \IF{FedCLG-C}
                \STATE{Aggregate the model $x_{t+1}^s$ as Eq. \ref{s_aggregate}}
        \ELSIF{FedCLG-S}
                \STATE{Aggregate the model $x_{t+1}^s$ as Eq. \ref{FedCLG-S}}
            \ENDIF
        \STATE{Perform server local training as Eq. \ref{s_local}}
        \ENDFOR
    \end{algorithmic}
\end{algorithm}

The primary difference between FedCLG-C and FedCLG-S, which is shown in Fig. \ref{fig:FedCLG}, lies in their methods for addressing the variance issue. FedCLG-C addresses the problem of partial participation during local training on the client side, while FedCLG-S corrects the partial participation error at the server side during the aggregation step. While FedCLG-C requires that the server broadcast an additional gradient $g_t^s$ during communication, FedCLG-S requires that each client upload an additional gradient $g_t^i$ to the server per round. The choice between using FedCLG-S or FedCLG-C should be based on the available bandwidth for uploading and downloading. Specifically, if the download bandwidth is restricted, FedCLG-S should be utilized. Conversely, if the upload bandwidth is restricted, FedCLG-C should be used. \textcolor{black}{To further address the communication efficiency concerns, our method can be integrated with quantization or compression techniques, which have been extensively studied in the FL setting (e.g.\cite{reisizadeh2020fedpaq, shlezinger2020uveqfed, jhunjhunwala2021adaptive, mao2022communication}). These methods are capable of significantly reducing the communication costs associated with additional transmissions.}

\begin{figure}[t] 
	\centering
	\includegraphics[width=0.99\linewidth]{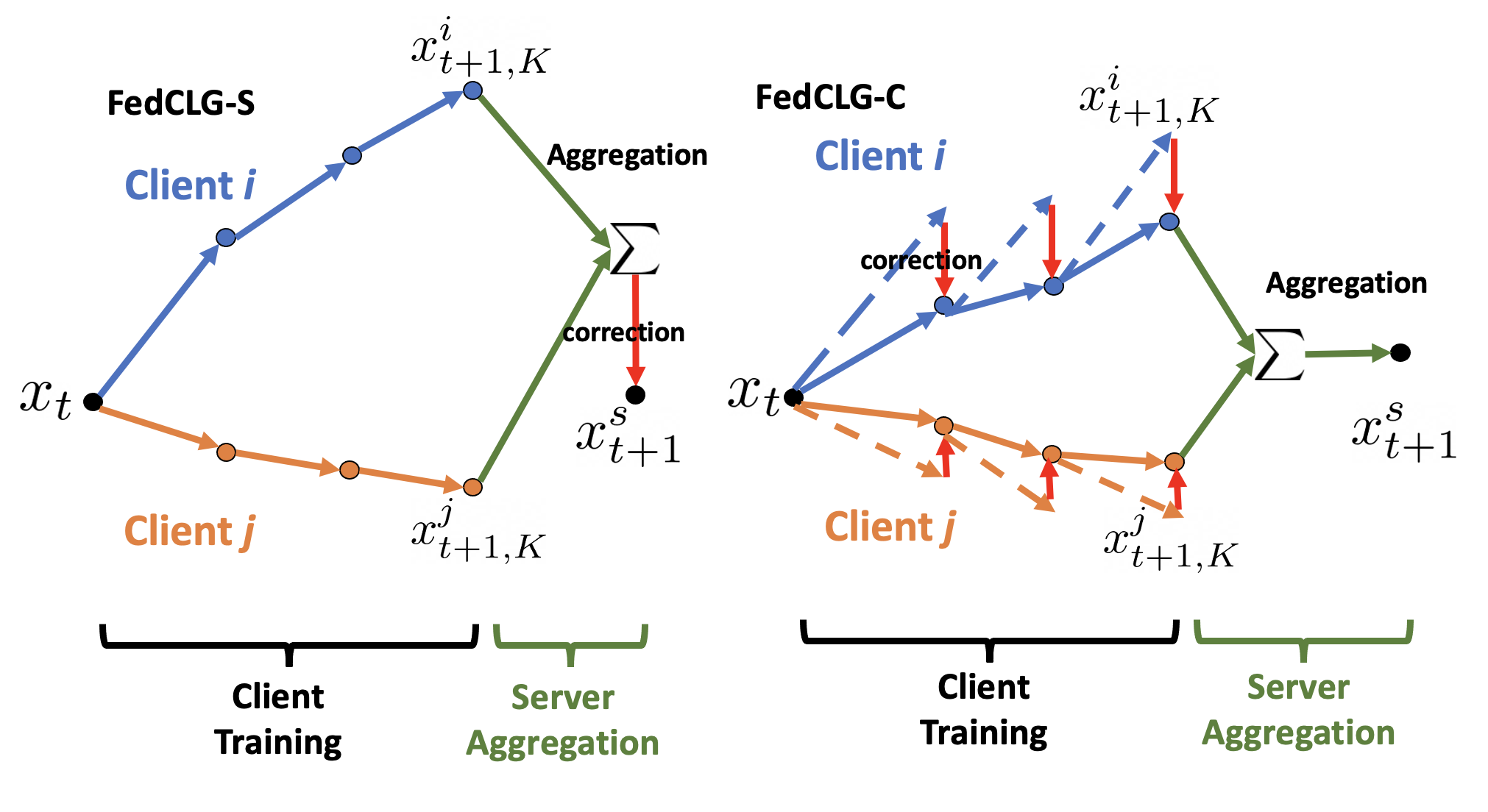}
	\caption{The key distinction between FedCLG-S and FedCLG-C lies in the timing and location of the correction step. In FedCLG-S, the corrections occur during the server aggregation step, whereas in FedCLG-C, they take place during each client's local training step.} \label{fig:FedCLG}
\vspace{-0.2in}
\end{figure}

\subsection{Comparison with FL Variance Reduction Methods}
Existing variance reduction methods in FL do not consider the potential benefits of using the server dataset to reduce variance. SCAFFOLD, proposed in \cite{karimireddy2020scaffold}, is the first work to identify client drift error and utilize control variates to correct it. However, SCAFFOLD requires additional gradient communication during both the upload and download processes. Alternatively, FedCLG-C or FedCLG-S can be chosen based on different upload/download communication scenarios, reducing the overall communication workload. Other variance reduction methods, such as \cite{jhunjhunwala2022fedvarp} and \cite{gu2021fast}, require the server to maintain $\mathcal{O}(Nd)$ memory, where $N$ is the number of total clients and $d$ is the model size, which can be very expensive and unrealistic in cross-device settings of FL. \textcolor{black}{Others \cite{zhang2021fedpd, acar2021federated} require additional client computations.} Moreover, all of the above methods use stale information to build the correction term $c_i$, which can negatively affect performance. Furthermore, none of these methods provide convergence guarantees under the hybrid FL setting. In the experimental section, we demonstrate the superiority of FedCLG.

\section{Convergence Analysis of FedCLG}
In this section, we will provide a convergence analysis of both versions of FedCLG in a non-convex setting. We will adopt the same assumptions as in the previous section, which were used for the convergence analysis of CLG-SGD.
\textcolor{black}{
\begin{theorem} \label{th2}
Suppose that client local learning rate $\eta$, global learning rate $\eta_g$ and server local learning rate $\gamma$ are chosen such that $\eta \leq \frac{1}{8KL}$, $\eta \eta_g \leq \frac{1}{36KL}$ and $\gamma \leq \frac{1}{6EL}$. Under Assumptions \ref{assm:smooth}, \ref{assm:unbiased-local}, \ref{l_variance}, suppose in each round $t$ the server uniformly selects $M$ out of $N$ clients without replacement, the sequence of FedCLG-C model vectors ${x_t}$ satisfies:
\begin{align}
    & \min_{t \in [T]} \mathbb{E}\|\nabla f(x_t)\|_2^2 = \mathcal{O}\bigg(\frac{(f_0 - f_*)}{T(\gamma E + \eta\eta_g K)}\bigg)\ \nonumber\\
    & + \mathcal{O}\bigg(\frac{\eta^3\eta_g L^2 K^3\sigma^2}{m_s(\gamma E + \eta\eta_g K)}\bigg) + \mathcal{O}\bigg(\frac{\gamma^2L E \sigma^2}{m_s(\gamma E + \eta\eta_g K)} \bigg) \nonumber\\
    & + \mathcal{O}\bigg(\frac{\eta^2\eta_g^2KL\sigma^2}{M m_s(\gamma E + \eta\eta_g K)} \bigg)  + \mathcal{O}\bigg(\frac{\eta^3\eta_g L^2 K^2\sigma_l^2}{\gamma E + \eta\eta_g K}\bigg) \nonumber\\
    & +\mathcal{O}\bigg(\frac{\eta^2\eta_g^2LK \sigma_l^2}{M(\gamma E + \eta\eta_g K)}\bigg) ,
\end{align}
where $f_0 = f(x_0)$, $f_* = f(x_*)$.
\end{theorem}
}

% \begin{proof}
% The proof is shown in Appendix [\ref{proof-t2}].
% \end{proof}
\begin{proof}
The proof is shown in Appendix B.
\end{proof}

% \textcolor{black}{
% \begin{Remark}
% The convergence bound presented above reveals that the second term is influenced by the client's local training drift, while the third and fourth terms are due to the limited size of the server's subset data compared to the overall population. Notably, the partial participation error is eliminated. Additionally, the bound becomes increasingly dependent on the size of the dataset stored at the server, which is expected since we use the server gradient to guide client updates. A larger server dataset can result in a more accurate server gradient direction, leading to a tighter overall bound. While a smaller $m_s$ may produce less accurate corrections than a larger $m_s$, our experiments demonstrate that even with a small $m_s$, our proposed method can significantly enhance convergence speed under extremely non-IID settings.
% \end{Remark}
% }

\begin{Corollary}
Let $\eta = \Theta(\frac{1}{K\sqrt{T}})$, $\eta_g = \Theta(\sqrt{MK})$ and $\gamma = \Theta(\frac{1}{\sqrt{ET}})$, the convergence rate of FedCLG-C becomes:
\begin{align}
     & \min_{t \in [T]} \mathbb{E}\|\nabla f(x_t)\|_2^2 \textcolor{black}{=} \mathcal{O}\bigg(\frac{1}{(\sqrt{MK}+\sqrt{E})\sqrt{T}}\bigg) \nonumber\\
     & + \mathcal{O}\bigg(\frac{\sqrt{MK}}{(\sqrt{MK}+\sqrt{E})T}\bigg)
\end{align}
\end{Corollary}
\textcolor{black}{
\begin{theorem} \label{th3}
Suppose that client local learning rate $\eta$, global learning rate $\eta_g$ and server local learning rate $\gamma$ are chosen such that $\eta \leq \frac{1}{3KL}$, $\eta \eta_g \leq \frac{1}{27KL}$ and $\gamma \leq \frac{1}{6EL}$. Under Assumptions \ref{assm:smooth}, \ref{assm:unbiased-local}, \ref{l_variance}, \ref{a_variance}, suppose that in each round $t$ the sever uniformly selects $M$ out of $N$ clients without replacement, the sequence of FedCLG-S model vectors ${x_t}$ satisfies:
% \begin{align}
%     & \min_{t \in [T]} \mathbb{E}\|\nabla f(x_t)\|_2^2 \leq \frac{20(f_0 - f_*)}{T(8\gamma E + \eta\eta_g K)}\ + \frac{57\eta^3\eta_g L^2 K^3\sigma_g^2}{8\gamma E + \eta\eta_g K} \nonumber\\
%     & + \frac{32\gamma^2 E L \sigma^2}{m_s(8\gamma E + \eta\eta_g K)} +  \frac{72K\eta^2\eta_g^2L\sigma^2}{Mm_s(8\gamma E + \eta\eta_g K)},
% \end{align}
\begin{align}
    & \min_{t \in [T]} \mathbb{E}\|\nabla f(x_t)\|_2^2 = \mathcal{O}\bigg(\frac{(f_0 - f_*)}{T(\gamma E + \eta\eta_g K)}\bigg)\ \nonumber\\
    & + \mathcal{O}\bigg(\frac{\eta^3\eta_g L^2 K^3\sigma_g^2}{\gamma E + \eta\eta_g K}\bigg) + \mathcal{O}\bigg(\frac{\gamma^2L E \sigma^2}{m_s(\gamma E + \eta\eta_g K)} \bigg) \nonumber\\
    & + \mathcal{O}\bigg(\frac{\eta^2\eta_g^2KL\sigma^2}{M m_s(\gamma E + \eta\eta_g K)} \bigg) + \mathcal{O}\bigg(\frac{\eta^3\eta_g L^2 K^2\sigma_l^2}{\gamma E + \eta\eta_g K}\bigg) \nonumber\\
    & +\mathcal{O}\bigg(\frac{\eta^2\eta_g^2LK \sigma_l^2}{M(\gamma E + \eta\eta_g K)}\bigg) ,
\end{align}
where $f_0 = f(x_0)$ and $f_* = f(x_*)$. 
\end{theorem}
}
% \begin{proof}
% The proof is shown in Appendix [\ref{proof-t3}].
% \end{proof}
\begin{proof}
The proof is shown in Appendix C.
\end{proof}

\begin{Corollary}
Let $\eta = \Theta(\frac{1}{K\sqrt{T}})$, $\eta_g = \Theta(\sqrt{MK})$ and $\gamma = \Theta(\frac{1}{\sqrt{ET}})$, the convergence rate of FedCLG-S becomes:
\begin{align}
     & \min_{t \in [T]} \mathbb{E}\|\nabla f(x_t)\|_2^2 \textcolor{black}{=} \mathcal{O}\bigg(\frac{1}{(\sqrt{MK}+\sqrt{E})\sqrt{T}}\bigg) \nonumber\\
     & + \mathcal{O}\bigg(\frac{\sqrt{MK}}{(\sqrt{MK}+\sqrt{E})T}\bigg)
\end{align}
\end{Corollary}

\textcolor{black}{
\begin{Remark}
 Both FedCLG-C and FedCLG-S's convergence results contain six terms where the second term is client local drift error, the third term is server training update error, the fourth term captures the stochastic error due to the limited size of the server's subset data compared to the overall population and the last two terms are stochastic client update error. Notably, both FedCLG-S and FedCLG-C eliminate the partial participation error. Additionally, the bound becomes increasingly dependent on the size of the dataset stored at the server, which is expected since we use the server gradient to guide client updates. A larger server dataset can result in a more accurate server gradient direction, leading to a tighter overall bound. While a smaller $m_s$ may produce less accurate corrections than a larger $m_s$, our experiments demonstrate that even with a small $m_s$, our proposed method can significantly enhance convergence speed under extremely non-IID settings.
\end{Remark}
}
\vspace{-4mm}
\textcolor{black}{
\begin{Remark}
The primary distinction between FedCLG-C and FedCLG-S lies in the second term, which addresses client local drift error. This is justifiable as FedCLG-C incorporates a correction term at each client's local training step, making it less dependent on global objective variance and more reliant on the quality of the correction step. Conversely, FedCLG-S applies the correction step during server aggregation, allowing client local training to be influenced by the variance between local and global objectives. This indicates that FedCLG-C may yield marginally better results when server datasets are reliable (i.e., low $\frac{\sigma^2}{m_s}$ ) and client data heterogeneity is high. However, FedCLG-S remains a robust choice, outperforming baseline methods (as shown in the experiment section). Under less extreme non-IID conditions, the choice between FedCLG-S and FedCLG-C should be influenced by bandwidth considerations.
\end{Remark}
}
\vspace{-4mm}
\textcolor{black}{
\begin{Remark}
    Under partial client participation setting, assume the server's local training epoch $E=0$, which reduces hybrid FL setting to classic FL setting, the convergence rates of FedCLG-C and FedCLG-S reduce to $\mathcal{O}\bigg(\frac{1}{\sqrt{MKT}}\bigg) + \mathcal{O}\bigg(\frac{1}{T}\bigg)$ which match the convergence rate achieved by the SOTA variance reduction methods \cite{karimireddy2020scaffold, jhunjhunwala2022fedvarp} used in the classic FL setting with partial client participation.
\end{Remark}
}

\vspace{-4mm}
\textcolor{black}{
\begin{Remark}
     To prevent client local drift error from dominating the convergence process, aiming for a convergence rate of $\mathcal{O}\left(\frac{1}{(\sqrt{MK}+\sqrt{E})\sqrt{T}}\right)$, both FedCLG-S and FedCLG-C need that the local epoch $K$ should not surpass $T/M$. 
\end{Remark}
}

\begin{Remark}
Both FedCLG-C and FedCLG-S exhibit convergence rates of $\mathcal{O}\bigg(\frac{1}{(\sqrt{MK}+\sqrt{E})\sqrt{T}}\bigg)$ under non-IID and partial participation settings, provided that there are enough training rounds $T$ \textcolor{black}{(i.e. $T \geq KM$)}. This rate is faster than that of CLG-SGD, which converges with a rate dominated by $\mathcal{O}\bigg(\frac{K}{(\sqrt{MK}+\sqrt{E})\sqrt{T}}\bigg)$. Interestingly, larger client-side local training epochs $K$ can actually hurt the convergence rate for CLG-SGD due to the negative effects of partial participation. However, after eliminating these negative effects in both FedCLG-S and FedCLG-C, the new convergence rates show that larger client-side local training epochs $K$ can actually increase the convergence rate.
\end{Remark}

\section{Experiments}
\subsection{Setup}
We conducted all experiments using Federated Learning (FL) simulation on the PyTorch framework and trained the models on Geforce RTX 3080 GPUs. We performed five random repeats and reported the averaged results. The detailed experimental settings are presented below.

\subsubsection{Dataset and Backbone Model}
To verify our theoretical findings, we evaluate the proposed methods on three datasets:

\textbf{MNIST \cite{deng2012mnist}}: We utilize LeNet-5 \cite{726791} as the backbone model. The default training hyperparameters are as follows: server local learning rate \textcolor{black}{tuning from $\gamma = \{0.01, 0.05, 0.25\}$}, client local learning rate \textcolor{black}{tuning from $\eta = \{0.01, 0.05, 0.25\}$}, local learning rates' decay factor equals to 0.99 until learning rate reaches $0.001$, global learning rate $\eta_g = 1$, and training batch sizes at both the server and clients set to 64.

\textbf{CIFAR-10 \cite{krizhevsky2009learning}}: We also utilize LeNet-5 as the backbone model. The default training hyperparameters are server local learning rate \textcolor{black}{tuning from $\gamma = \{0.01, 0.05, 0.25\}$}, client local learning rate \textcolor{black}{tuning from $\eta = \{0.02, 0.08, 0.32\}$}, local learning rates' decay factor equals to 0.99 until learning rate reaches $0.001$, global learning rate $\eta_g = 1$, and training batch sizes at both the server and clients set to 128.

\textcolor{black}{
\textbf{CIFAR-100 \cite{krizhevsky2009learning}}: For CIFAR-100 datasets, we use 20 superclasses to reclassify the data samples. We utilize MobileNetV2 \cite{sandler2018mobilenetv2} as the backbone model. The default training hyperparameters are server local learning rate \textcolor{black}{tuning from $\gamma = \{0.005, 0.05, 0.5\}$}, client local learning rate \textcolor{black}{tuning from $\eta = \{0.01, 0.1, 1\}$}, local learning rates' decay factor equals to 0.99 until learning rate reaches $0.001$, global learning rate $\eta_g = 1$, and training batch sizes at both the server and clients set to 128.
}
% \textbf{Shakespeare}: The Shakespeare [XXX] dataset is a language modelling task where each client represents a role from one of the plays in The Collective Works of William Shakespeare. We employed CharLSTM [XXX] as the backbone model. The default training hyperparameters are: server local learning rate $\gamma = 0.5$, client local learning rate $\eta = 0.5$, global learning rate $\eta_g = 1$, and training batch sizes at both the server and clients set to 32.

\subsubsection{Client Setting}
Our experimental evaluations cover both IID and non-IID client datasets. Without loss of generality, we assume that each client has an equal number of data samples. Specifically, for MNIST, we simulate 200 clients with 150 data samples each, for CIFAR-10 and CIFAR-100 we simulate 200 clients with 200 data samples each. 

\textbf{IID and Non-IID scenario.}
For IID datasets, we randomly assign an equal-size local dataset from the total training set to each client. For non-IID datasets, we apply the Dirichlet method ($\alpha$) to create the data distribution for each client. We use various $\alpha$ values to demonstrate the convergence performance under different degrees of non-IID.

\textbf{Client participation.}
The primary objective of this paper is to investigate how the server dataset can be leveraged to eliminate the partial participation error in Federated Learning. Therefore, in the experiment section, we will focus on the scenario where clients participate partially in the training process. Specifically, in the main experiments, we uniformly sample $M = 4$ clients without replacement in every round for MNIST dataset, and $M=10$ clients for CIFAR-10 and CIFAR-100 dataset.

\textbf{Number of Local Epoch $K$.}
To investigate the impact of the client's local training epoch, we vary the value of $K$ to be $1$, $3$, or $5$.

\subsubsection{Server Setting}
Compared to classical FL, Hybrid FL introduces the novel setting where the server itself contains a subset of the population dataset. 

\textbf{Size of server dataset $m_s$.}
In the MNIST experiment, we consider the server dataset to contain $1\%$ of the total training data samples, while in the CIFAR-10 and CIFAR-100 experiments, we consider it to contain $5\%$ of the total training data samples. In the ablation studies, we change the size of the server dataset to examine its impact on the hybrid FL approach.

\textbf{Number of Local Epoch $E$.}
To investigate the impact of the server's local training epoch, we vary the value of $E$ to be $1$, $3$, or $5$.

\subsubsection{Baselines}
In our study, we evaluate the performance of our approach against the following established methods: (1) \textbf{Server-only}: This method involves training a model solely on the server's dataset, denoted as $\mathcal{D}_s^t$, without using any client data.
(2) \textbf{FedAvg}: As a key baseline in traditional Federated Learning (FL), FedAvg employs client-side data for distributed learning. It is worth noting that the server's dataset is not used in this method.
(3) \textbf{CLG-SGD}: This state-of-the-art hybrid FL technique, introduced in \cite{10001832}, alternates between server local training and client local training.
(4) \textbf{SCAFFOLD+}: SCAFFOLD \cite{karimireddy2020scaffold} is a leading FL variance reduction method. To enable a fair comparison that highlights the advantages of our proposed approach, we have incorporated the alternating training concept from CLG-SGD into SCAFFOLD. Consequently, SCAFFOLD+ is an improved version of SCAFFOLD that utilizes the server's local dataset for additional training.

\subsection{Experiments Results}
The primary objective of the experiments is to showcase the differences in the number of global training rounds required by different methods to achieve a specific test accuracy, thereby highlighting the differences in their convergence speeds.

\textbf{Performances Comparison.} 
We first compare the convergence performance of our proposed methods and baselines on the MNIST dataset. We consider both IID and non-IID settings. As shown in Fig. \ref{MNIST_main} (a), for the IID setting, FedCLG-S and FedCLG-C outperform FedAvg and Server-Only, but only achieve comparable performance with CLG-SGD (FedCLG-S even performs slightly worse). This is expected, as there is no variance reduction needed in the IID setting, i.e., $\sigma_g = 0$. Introducing the correction step can bring additional errors caused by the variance of server gradient, resulting in a marginal benefit (or slight weakness) in the IID setting. However, in most realistic scenarios, non-IID data distribution is more common. As seen in Fig. \ref{MNIST_main} (b), under the non-IID setting ($\alpha = 0.2$), even for the MNIST dataset, both FedCLG-S and FedCLG-C outperform the other baselines. Moreover, we observe that in the IID setting, FedAvg converges faster than Server-Only, but in the non-IID setting, the convergence speed of FedAvg decreases significantly. In contrast, for the methods applying additional server training, the convergence speed does not decrease significantly even under the non-IID case. This further validates the necessity of additional server local training, which is consistent with findings in \cite{10001832}.

\begin{figure}[h]
\centering
 \subfloat[MNIST IID]{\includegraphics[width=0.495\linewidth]{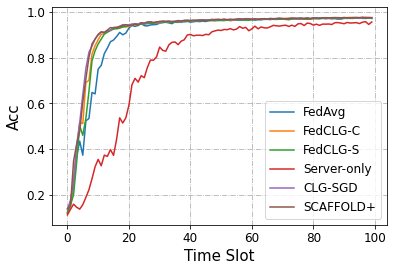}} 
\subfloat[MNIST Non-IID ($\alpha = 0.2$)]{\includegraphics[width=0.495\linewidth]{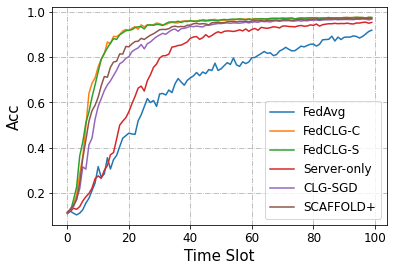}} 
\caption{Convergence performances on MNIST} \label{MNIST_main}
\end{figure}

% We evaluate the proposed methods on CIFAR-10, a more challenging classification task, under two non-IID settings ($\alpha = 0.3$, $\alpha = 0.8$). As shown in Fig. \ref{CIFAR_main}, FedCLG-S and FedCLG-C outperform CLG-SGD by a significant margin. Specifically, in Fig. \ref{CIFAR_main} (a), setting the target test accuracy to 0.5, FedCLG-C requires 134 global rounds, and FedCLG-S requires 144 global rounds. In contrast, CLG-SGD requires 246 global rounds, which is 1.83 (and 1.70) times more than FedCLG-C and FedCLG-S. Moreover, under the higher non-IID setting, although SCAFFOLD+ also applies variance reduction methods, it can only achieve comparable performance with CLG-SGD, indicating that such variance reduction fails to work. We attribute this failure to the use of stale estimated global and local gradients as the guideline to correct the variance, which introduces additional error. The inability of directly applying SCAFFOLD-related methods, which use stale information, further underscores the importance of reasonable exploitation of the server local data and the necessity of our proposed method, FedCLG.
\textcolor{black}{
We evaluate the proposed methods on CIFAR-10 and CIFAR-100 under non-IID settings. As shown in Figs. \ref{CIFAR_main} and \ref{CIFAR_100}, FedCLG-S and FedCLG-C outperform CLG-SGD by a significant margin. Specifically, in Fig. \ref{CIFAR_main} (a), setting the target test accuracy to 0.5, FedCLG-C requires 134 global rounds, and FedCLG-S requires 144 global rounds. In contrast, CLG-SGD requires 246 global rounds, which is 1.83 (and 1.70) times more than FedCLG-C and FedCLG-S. Furthermore, under high non-IID conditions, FedCLG-C slightly outperforms FedCLG-S, aligning with our theoretical predictions. Despite this, FedCLG-S still significantly surpasses the existing baseline. In scenarios with lower non-IID conditions, both versions of FedCLG exhibit comparable performance. Consequently, the choice between FedCLG-C and FedCLG-S should be based on the conditions of upload/download communication bandwidth.
}
Moreover, although SCAFFOLD+ also applies variance reduction methods, it can only achieve comparable performance with CLG-SGD, indicating that such variance reduction fails to work. We attribute this failure to the use of stale estimated global and local gradients as the guideline to correct the variance, which introduces additional error. The inability to directly apply SCAFFOLD-related methods, which use stale information, further underscores the importance of reasonable exploitation of the server's local data and the necessity of our proposed method, FedCLG.

\begin{figure}[h]
\centering
\subfloat[CIFAR-10 Non-IID ($\alpha = 0.3$)]{\includegraphics[width=0.495\linewidth]{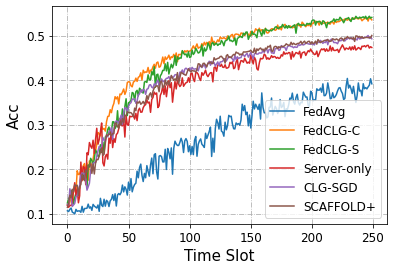}} 
\subfloat[CIFAR-10 Non-IID ($\alpha = 0.8$)]{\includegraphics[width=0.495\linewidth]{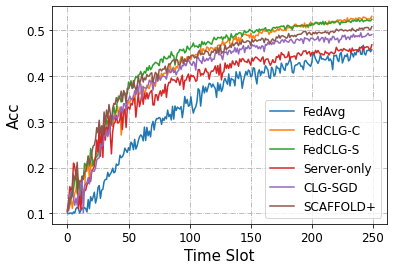}} 
\caption{Convergence performances on CIFAR-10} \label{CIFAR_main}
\end{figure}

\begin{figure}[h]
\centering
\subfloat[CIFAR-100 Non-IID ($\alpha = 0.5$)]{\includegraphics[width=0.495\linewidth]{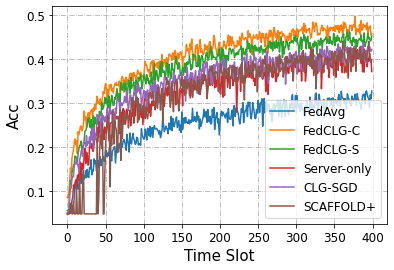}} 
\subfloat[CIFAR-100 Non-IID ($\alpha = 0.8$)]{\includegraphics[width=0.495\linewidth]{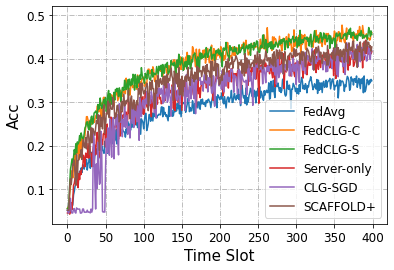}} 
\caption{\textcolor{black}{Convergence performances on CIFAR-100}} \label{CIFAR_100}
\end{figure}

\textbf{Impact of number of participated clients $M$.}
In this series of experiments, we investigate the impact of the number of participated clients $M$ on the convergence speed of our proposed methods and the baseline method, CLG-SGD. We test the experiments on the MNIST dataset while holding all other parameters constant and only varying the number of participated clients $M$. The third column of Table \ref{tab:num_participated} reports the number of global rounds required by each experiment, with the target test accuracy set to $97\%$. The results show that, for both the baseline and our proposed methods, increasing the number of participated clients leads to a decrease in the required number of global rounds and a higher convergence speed. Moreover, under different numbers of participated clients (i.e., 4, 6, 24), the convergence speeds of FedCLG-S and FedCLG-C outperform CLG-SGD. However, with a larger increase in the number of participated clients, the benefit of our proposed methods slightly decreases. For example, with $M = 4$, FedCLG-C achieves a $1.74\times$ speed-up compared to CLG-SGD, while with a larger number of clients ($M = 24$), the speed-up decreases to $1.56\times$. This observation is consistent with our theoretical analysis, which suggests that increasing the number of participated clients in CLG-SGD reduces the error caused by partial participation, thus leading to a smaller benefit from the correction step.

\begin{table}[h]
\caption{Impact of participated clients $M$}
\setlength\extrarowheight{1.5pt}
\label{tab:num_participated}
\vskip 0.15in
\begin{center}
\begin{small}
\begin{sc}
\begin{tabular}{|c|c|c|}
\hline
Methods & Participated Clients & Numbers of round \\
\hline
\hline
\multirow{3}*{CLG-SGD} & $4$ & $68 ~ (1.0 \times)$ \\
\cline{2-3}
& $6$ & $59  ~ (1.0 \times)$ \\
\cline{2-3}
& $24$ & $39  ~ (1.0 \times)$ \\
\hline
\multirow{3}*{FedCLG-S} & $4$ & $42  ~ (1.61 \times)$ \\
\cline{2-3}
& $6$ & $35  ~ (1.68 \times)$ \\
\cline{2-3}
& $24$ & $28  ~ (1.39 \times)$ \\
\hline
\multirow{3}*{FedCLG-C} & $4$ & $39  ~ (1.74 \times)$ \\
\cline{2-3}
& $6$ & $32  ~ (1.84 \times)$ \\
\cline{2-3}
& $24$ & $25  ~ (1.56 \times)$ \\
\hline
\end{tabular}
\end{sc}
\end{small}
\end{center}
\vskip -0.1in
\end{table}

\textbf{Impact of server dataset size.}
To investigate the impact of the server dataset, we first conduct an experiment on MNIST with different levels of $m_s$. In our initial setting, we assume that the server contained $1\%$ of the total training samples each global round. We then extend this value to be $4\%$ and $10\%$. The results, shown in Figure \ref{size_m_s}, indicate that increasing the size of the server dataset can improve the overall convergence speeds of both FedCLG-C and FedCLG-S, as a larger server dataset provides a two-fold improvement. Firstly, more data can be acquired each round, leading to improved server local training. Secondly, the larger server dataset helps us to achieve a more reliable correction step with $g_s$ approaching closer to the actual global optimal direction.

\begin{figure}[h]
\centering
\subfloat[FedCLG-C with different $m_s$]{\includegraphics[width=0.495\linewidth]{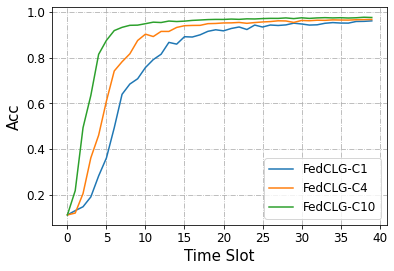}} 
\subfloat[FedCLG-S with different $m_s$]{\includegraphics[width=0.495\linewidth]{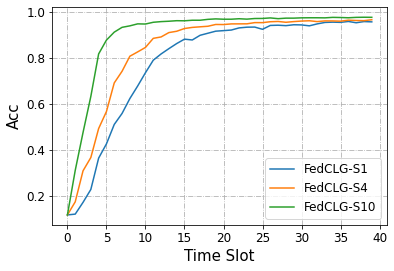}} 
\caption{Impact of server dataset size $m_s$} \label{size_m_s}
\end{figure}

\textcolor{black}{
We then include experiments when the server dataset has a slight distribution shift. This shift is quantified using cosine similarity between the server and overall distributions, specifically targeting scenarios where the shift is small (cosine similarity is approximately 0.95). The results, illustrated in Figure \ref{shift} for both MNIST and CIFAR-10 datasets, reveal that even with this slight distribution shift, our proposed method consistently outperforms baseline approaches. It is important to note that we do not explore scenarios involving significant distribution shifts, as such conditions would effectively reduce the server to another client role, a situation outside the scope of our study in hybrid federated learning.
\begin{figure}[h]
\centering
\subfloat[MNIST (Similarity = 0.946)]{\includegraphics[width=0.495\linewidth]{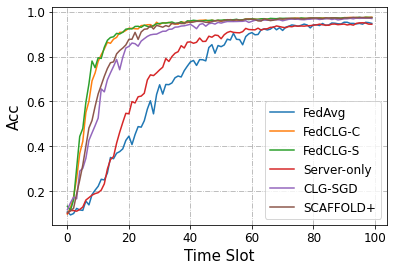}} 
\subfloat[CIFAR-10 (Similarity = 0.952)]{\includegraphics[width=0.495\linewidth]{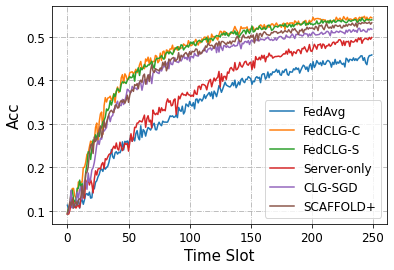}} 
\caption{\textcolor{black}{$m_s$ Distribution Shift}} \label{shift}
\end{figure}
}

\begin{figure}[h]
\centering
\subfloat[FedCLG-C with different $K$]{\includegraphics[width=0.495\linewidth]{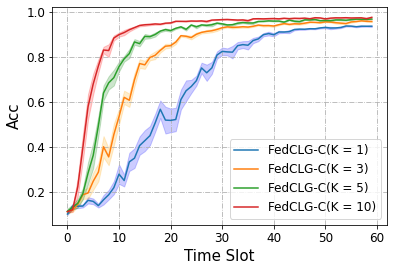}} 
\subfloat[FedCLG-S with different $K$]{\includegraphics[width=0.495\linewidth]{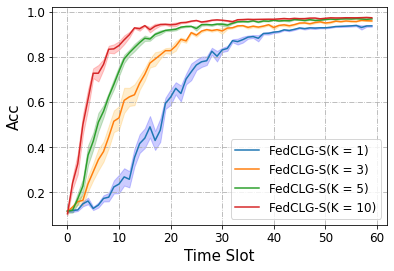}} 
\caption{Impact of client epochs $K$} \label{fig_k}
\end{figure}

\textbf{Impact of client epochs $K$ and server epochs $E$.} In this subsection, we investigate the impact of the number of client epochs $K$ and server epochs $E$. We keep other parameters fixed and only vary the number of client local epochs in Fig. \ref{fig_k}. The convergence results shown in Fig. \ref{fig_k} are consistent with our theoretical analysis, as a larger number of client local epochs $K$ results in increased convergence speed for both FedCLG-C and FedCLG-S. Similarly, in Fig. \ref{fig_e}, we fix all parameters and only change the number of server local epochs. It can be observed that, for both FedCLG-C and FedCLG-S, the largest local epoch ($E=5$) achieves the best convergence performance in both Figs. \ref{fig_e}(a) and (b). 

\begin{figure}[h]
\centering
\subfloat[FedCLG-C with different $E$]{\includegraphics[width=0.495\linewidth]{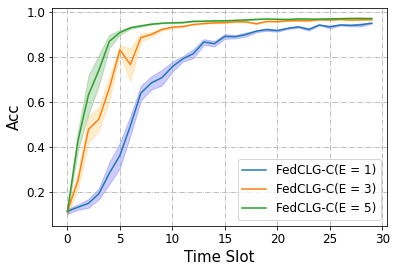}} 
\subfloat[FedCLG-S with different $E$]{\includegraphics[width=0.495\linewidth]{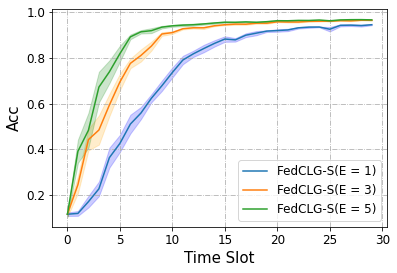}} 
\caption{\textcolor{black}{Impact of server epochs $E$}} \label{fig_e}
\end{figure}

\section{Conclusion}
In this paper, we address the hybrid Federated Learning (FL) setting, where the server has access to a small portion of the total training samples. We consider a more realistic scenario where clients with non-IID data can only partially participate in each server aggregation. Firstly, we provide a novel theoretical analysis for CLG-SGD, the state-of-the-art hybrid FL method. Our analysis reveals the drawbacks of the current method due to clients' partial participation. Motivated by these observations, we propose a novel method called FedCLG, which fully exploits the benefits of a small server dataset. We further study two versions of FedCLG based on different server-client communication scenarios. We provide thorough theoretical analysis and experimental comparisons to validate the proposed methods. Future research will focus on developing a theoretically guaranteed method under an unbounded client-server communication pattern.

\bibliographystyle{IEEEtran}
\bibliography{reference}

\appendices
\section{Proof of Theorem 1}
\label{proof-t1}
% Previous work \cite{10001832} only provides convergence analysis of CLG-SGD under IID and fully participation setting and fails to provide the theoretical benefit of additional server training. In this section, we extend the novel convergence analysis under non-IID and partial participation scenario. 
\textcolor{black}{
Note that in the following proof, we utilize $g_{t,k}^i$ to represent $\nabla F_i(x_{t,k}^i, \zeta_i)$. 
For each global round $t$, we have the intermediate model after client local training as:
\begin{align}
     x_{t+1}^s = x_{t} + \eta_g \frac{1}{M} \sum_{i\in \mathcal{S}_t}\Delta_t^i,
\end{align}
where $\Delta_t^i = -\sum_{k=0}^{K-1} \eta g_{t,k}^i$. The final model after server local training is as:
\begin{align}
    & x_{t+1}  = x_{t+1}^s - \sum_{e=0}^{E-1}\gamma g_{t+1, e}^s \nonumber\\
    = &   x_{t} - \eta\eta_g \frac{1}{M} \sum_{i\in \mathcal{S}_t}\sum_{k=0}^{K-1} g_{t,k}^i - \gamma\sum_{e=0}^{E-1} \nabla f_s(x_{t+1,e}^s).
\end{align}}

Due to the smoothness in Assumption \ref{assm:smooth}, taking expectation of $f(x_{t+1})$ over the randomness at communication round $t$, we have:
\begin{align}
    \mathbb{E}_t [f(x_{t+1})] &\leq f(x_t) + \underbrace{\bigg< \nabla f(x_t), \mathbb{E}_t [x_{t+1} - x_t] \bigg>}_{T_1}\nonumber\\
    & + \underbrace{\frac{L}{2} \mathbb{E}_t [\| x_{t+1} - x_t \|^2]}_{T_2}. \label{ineq_smooth}
\end{align}

\textcolor{black}{
We first bound $T_1$ as follows:
\begin{align}
    & T_1 = \bigg< \nabla f(x_t), \mathbb{E}_t [x_{t+1} - x_t] \bigg>\nonumber\\
    & = -\eta\eta_g\bigg< \nabla f(x_t), \mathbb{E}_t [\frac{1}{M} \sum_{i\in \mathcal{S}_t}\sum_{k=0}^{K-1} g_{t,k}^i]\bigg> \nonumber\\
    & - \gamma\bigg<\nabla f(x_t), \mathbb{E}_t[\sum_{e=0}^{E-1} \nabla f_s(x_{t+1,e}^s)] \bigg>\nonumber\\
    & = \underbrace{-\eta\eta_g\bigg< \nabla f(x_t), \mathbb{E}_t [\frac{1}{M} \sum_{i\in \mathcal{S}_t}\sum_{k=0}^{K-1} \nabla f_i(x_{t,k}^i)]\bigg>}_{T_3}\nonumber\\
    & \underbrace{- \gamma\bigg<\nabla f(x_t), \mathbb{E}_t[\sum_{e=0}^{E-1} \nabla f_s(x_{t+1,e}^s)] \bigg>}_{T_4},
\end{align}There the second equality is due to the assumption of an unbiased local gradient estimate.}
The term $T_3$ can be bounded as follows:
\begin{align}
    & T_3 = -\eta\eta_g\bigg< \nabla f(x_t), \mathbb{E}_t [\frac{1}{M} \sum_{i\in \mathcal{S}_t}\sum_{k=0}^{K-1} \nabla f_i(x_{t,k}^i)]\bigg> \nonumber\\
     & = -\frac{\eta\eta_g}{K}\bigg< K\nabla f(x_t), \mathbb{E}_t [\frac{1}{N} \sum_{i\in [N]}\sum_{k=0}^{K-1} \nabla f_i(x_{t,k}^i)]\bigg> \nonumber\\
     & = \underbrace{\frac{\eta\eta_g}{2K}\mathbb{E}_t [\|\frac{1}{N} \sum_{i\in [N]}\sum_{k=0}^{K-1} \nabla f_i(x_{t,k}^i) - K\nabla f(x_{t})\|^2]}_{T_5} \nonumber\\
     & - \frac{\eta\eta_g K}{2}\|\nabla f(x_{t})\|^2 - \frac{\eta\eta_g}{2K}\mathbb{E}_t [\|\frac{1}{N} \sum_{i\in [N]}\sum_{k=0}^{K-1} \nabla f_i(x_{t,k}^i)\|^2]. \label{ineq_t3}
\end{align}
The last equality is due to the fact that  $\bigg<x, y \bigg> = \frac{1}{2} [ \| x \|^2 + \| y \|^2 - \| x - y \|^2 ]$. Then we can bound $T_5$ as:
\begin{align}
    &T_5 = \frac{\eta\eta_g}{2K}\mathbb{E}_t [\|\frac{1}{N} \sum_{i\in [N]}\sum_{k=0}^{K-1} \nabla f_i(x_{t,k}^i) - K\nabla f(x_{t})\|^2] \nonumber \\
    & = \frac{\eta\eta_g}{2K}\mathbb{E}_t [\|\frac{1}{N} \sum_{i\in [N]}\sum_{k=0}^{K-1} [\nabla f_i(x_{t,k}^i) - \nabla f_i(x_{t})]\|^2] \nonumber \\
    & \leq \frac{\eta\eta_g}{2N}\sum_{i\in [N]}\sum_{k=0}^{K-1} \mathbb{E}_t [\| [\nabla f_i(x_{t,k}^i) - \nabla f_i(x_{t})]\|^2] \nonumber \\
    & \leq \underbrace{\frac{\eta\eta_g L^2}{2N}\sum_{i\in [N]}\sum_{k=0}^{K-1} \mathbb{E}_t [\| x_{t,k}^i - x_{t}\|^2] }_{T_6}
\end{align}
The first inequality is based on Cauchy-Schwarz inequality. \textcolor{black}{Then we have $T_6$ be bounded as:
\begin{align}
    & T_6 =  \frac{\eta\eta_g L^2}{2N}\sum_{i\in [N]}\sum_{k=0}^{K-1} \mathbb{E}_t [\| \eta \sum_{\tau = 0}^k g_{t, \tau}^i\|^2] \nonumber\\
    & = \frac{\eta\eta_g L^2}{2N}\sum_{i\in [N]}\sum_{k=0}^{K-1} \mathbb{E}_t [\| \eta \sum_{\tau = 0}^k (g_{t, \tau}^i  - \nabla f_i(x_{t, k}^i) )\|^2] \nonumber\\
    & + \frac{\eta\eta_g L^2}{2N}\sum_{i\in [N]}\sum_{k=0}^{K-1} \mathbb{E}_t [\| \eta \sum_{\tau = 0}^k \nabla f_i(x_{t, k}^i) \|^2]
    \nonumber\\
    & = \frac{\eta\eta_g L^2}{2N}\sum_{i\in [N]}\sum_{k=0}^{K-1} \mathbb{E}_t [\| \eta \sum_{\tau = 0}^k (g_{t, \tau}^i - \nabla f_i(x_{t, k}^i) )\|^2] \nonumber\\
    &  + \frac{\eta\eta_g L^2 K}{2N}\sum_{i\in [N]}\sum_{k=0}^{K-1} \mathbb{E}_t [\sum_{\tau = 0}^k \| \eta \nabla f_i(x_{t, k}^i) \|^2]
    \nonumber\\
    & \leq \frac{\eta^3\eta_g L^2 K^2}{2} \sum_{k=0}^{K-1} \underbrace{\frac{1}{N}\sum_{i\in [N]}\mathbb{E}_t [\|\nabla f_i(x_{t, k}^i)\|^2]}_{T_7} + \frac{\eta^3\eta_g L^2 K^2 \sigma_l^2}{2},
    \label{ineq_t6}
\end{align} where the second equality is based on the assumption \ref{a_variance}.
}
To further bound $T_7$, we have:
\begin{align}
    & T_7 \leq \frac{3}{N}\sum_{i\in [N]}\mathbb{E}_t [\|\nabla f_i(x_{t, k}^i)-\nabla f_i(x_{t})\|^2] \nonumber\\
    & + \frac{3}{N}\sum_{i\in [N]}\mathbb{E}_t [\|\nabla f_i(x_{t}) - \nabla f(x_{t})\|^2] + \frac{3}{N}\sum_{i\in [N]}\mathbb{E}_t [\|\nabla f(x_{t})\|^2] \nonumber \\
    & \leq \frac{3 L^2}{N} \sum_{i\in [N]} \mathbb{E}_t [\| x_{t} - x_{t, k}^i\|^2] + 3 \sigma_g^2 + 3\mathbb{E}_t [\|\nabla f(x_{t})\|^2], 
\end{align}
where the last inequality is due to the assumptions \ref{assm:smooth},\ref{a_variance}. \textcolor{black}{Substituting $T_7$ to (\ref{ineq_t6}), we have:
\begin{align}
    & T_6 \leq \frac{\eta^3\eta_g L^2 K^2}{2N} \sum_{i\in [N]}\sum_{k=0}^{K-1}( 3 L^2\mathbb{E}_t [\| x_{t} - x_{t, k}^i\|^2] \nonumber\\
    & + 3 \sigma_g^2 + 3\mathbb{E}_t [\|\nabla f(x_{t})\|^2]) + \frac{\eta^3\eta_g L^2 K^2 \sigma_l^2}{2}\nonumber\\
    & \leq \frac{3\eta^3\eta_g L^2 K^3}{2 (1-\mathcal{B})}\sigma_g^2 + \frac{\eta^3\eta_g L^2 K^2 \sigma_l^2}{2 (1-\mathcal{B})} \nonumber\\
    &  + \frac{3\eta^3\eta_g L^2 K^3}{2 (1-\mathcal{B})}\mathbb{E}_t [\|\nabla f(x_{t})\|^2] \nonumber\\
    & \leq \frac{3\eta^3\eta_g L^2 K^2 (\sigma_l^2+3K\sigma_g^2)}{4} + \frac{\eta\eta_g K}{4}\mathbb{E}_t [\|\nabla f(x_{t})\|^2]
\end{align}
where $\mathcal{B} = 3\eta^2L^2K^2$ and let $\eta \leq \frac{1}{3LK}$ such that $\mathcal{B} \leq \frac{1}{3}$, $\frac{1}{1 - \mathcal{B}} \leq \frac{3}{2}$ and $\frac{\mathcal{B}}{1 - \mathcal{B}} \leq \frac{1}{2}$. Then we substitute it to (\ref{ineq_t3}), we can get:
\begin{align}
    &T_3 \leq \frac{3\eta^3\eta_g L^2 K^2 (\sigma_l^2+3K\sigma_g^2)}{4}  - \frac{\eta\eta_g K}{4}\|\nabla f(x_{t})\|^2 \nonumber\\
    &- \frac{\eta\eta_g}{2K}\mathbb{E}_t [\|\frac{1}{N} \sum_{i\in [N]}\sum_{k=0}^{K-1} \nabla f_i(x_{t,k}^i)\|^2].
\end{align}}

Next, the term $T_4$ can be bounded as 
\begin{align}
    & T_4 = - \frac{\gamma}{E}\bigg<E\nabla f(x_t), \mathbb{E}_t[\sum_{e=0}^{E-1} \nabla f(x_{t+1,e}^s)] \bigg> \nonumber\\
    & = \underbrace{\frac{\gamma}{2E} \mathbb{E}_t[\|\sum_{e=0}^{E-1} \nabla f(x_{t+1,e}^s) - E\nabla f(x_t)\|^2]}_{T_8} \nonumber\\
    &- \frac{\gamma E}{2}\|\nabla f(x_{t})\|^2 - \frac{\gamma}{2E}\mathbb{E}_t[\|\sum_{e=0}^{E-1} \nabla f(x_{t+1,e}^s)\|^2]. \label{ineq_t4}
\end{align}
The last equality is due to the fact that  $\bigg<x, y \bigg> = \frac{1}{2} [ \| x \|^2 + \| y \|^2 - \| x - y \|^2 ]$. \textcolor{black}{Then the term $T_8$ can be bounded as follows:
\begin{align}
    & T_8 = \frac{\gamma}{2E} \mathbb{E}_t[\|\sum_{e=0}^{E-1} \nabla f(x_{t+1,e}^s) - E\nabla f(x_t)\|^2] \nonumber\\
    & \leq \frac{\gamma}{2}\sum_{e=0}^{E-1} \mathbb{E}_t[\| \nabla f(x_{t+1,e}^s) - \nabla f(x_t)\|^2] \nonumber\\
    & \leq \underbrace{\frac{\gamma L^2}{2}\sum_{e=0}^{E-1} \mathbb{E}_t[\| x_{t+1,e}^s - x_t\|^2]}_{T_9} \nonumber\\
    & \leq {\gamma L^2}\sum_{e=0}^{E-1} \mathbb{E}_t[\| x_{t+1,e}^s - x_{t+1}^s\|^2]  + {\gamma L^2}\sum_{e=0}^{E-1} \mathbb{E}_t[\| x_{t+1}^s - x_t\|^2]\nonumber\\
    & \leq \underbrace{{\gamma L^2}\sum_{e=0}^{E-1} \mathbb{E}_t[\| x_{t+1,e}^s - x_{t+1}^s\|^2]}_{T_{10}} + \frac{\eta^2 \eta_g^2\gamma E L^2 K \sigma_l^2}{M}\nonumber\\
    & + \eta^2 \eta_g^2\gamma E L^2 \underbrace{\mathbb{E}_t [\|\frac{1}{M} \sum_{i\in \mathcal{S}_t}\sum_{k=0}^{K-1} \nabla f_i(x_{t,k}^i) \|^2]}_{T_{11}}.
\end{align}}
Then bounding $T_{10}$, we have:
\begin{align}
    & T_{10} = \gamma L^2\sum_{e=0}^{E-1}\mathbb{E}_t[\| \sum_{\tau_e=0}^{e-1} \gamma \nabla f_s(x_{t+1, \tau_e}^s)\|^2]\nonumber\\
    & \leq \gamma L^2\sum_{e=0}^{E-1}\gamma^2\mathbb{E}_t[\| \sum_{\tau_e=0}^{e-1} \nabla f_s(x_{t+1, e}^s)\|^2]\nonumber\\
    &  \leq 3\gamma^3L^2\sum_{e=0}^{E-1}\mathbb{E}_t[\|\sum_{\tau_e=0}^{e-1} ( \nabla f_s(x_{t+1, e}^s) -  \nabla f(x_{t+1, e}^s))\|^2]\nonumber\\
    & + 3\gamma^3 L^2\sum_{e=0}^{E-1}\mathbb{E}_t[\| \sum_{\tau_e=0}^{e-1} (\nabla f(x_{t+1, e}^s) -  \nabla f(x_{t}))\|^2] \nonumber\\
    & + 3\gamma^3 L^2\sum_{e=0}^{E-1}\mathbb{E}_t[\|\sum_{\tau_e=0}^{e-1}  \nabla f(x_{t})\|^2] \nonumber\\
    & \leq \frac{3\gamma^3 E^2 L^2 \sigma^2}{m_s} + 3\gamma^3E^2 L^4\sum_{e=0}^{E-1}\mathbb{E}_t[\|x_{t+1, e}^s -  x_{t}\|^2] \nonumber\\
    & + 3\gamma^3E^2 L^2\sum_{e=0}^{E-1}\mathbb{E}_t[\| \nabla f(x_{t})\|^2],
\end{align}
where the first term in the third inequality is due to the fact that $\mathbb{E} [\|x_1 + \cdots + x_n \|^2] = \mathbb{E}[\|x_1\|^2 + \cdots + \| x_n \|^2]$ if $x_i$ is independent with zero mean and assumption \ref{a_variance}, 

Then to bound the term $T_{11}$, we have:
\begin{align}
    &T_{11} \leq 3\mathbb{E}_t [ \|\frac{1}{M} \sum_{i\in \mathcal{S}_t}\sum_{k=0}^{K-1}[ \nabla f_i(x_{t,k}^i) - \nabla f_i(x_{t}) ]\|^2]\nonumber\\
    & + 3\mathbb{E}_t [ \|\frac{1}{M} \sum_{i\in \mathcal{S}_t}\sum_{k=0}^{K-1}[\nabla f_i(x_{t}) -  \nabla f(x_{t}) ]\|^2] \nonumber\\
    & + 3\mathbb{E}_t [ \|\frac{1}{M} \sum_{i\in \mathcal{S}_t}\sum_{k=0}^{K-1} \nabla f(x_{t}) \|^2] \nonumber\\
    & \leq 3\mathbb{E}_t [ \frac{1}{M} \sum_{i\in \mathcal{S}_t}\|\sum_{k=0}^{K-1}[ \nabla f_i(x_{t,k}^i) - \nabla f_i(x_{t}) ]\|^2]\nonumber\\
    & + 3\mathbb{E}_t [ \|\frac{1}{M} \sum_{i\in \mathcal{S}_t}\sum_{k=0}^{K-1}[\nabla f_i(x_{t}) -  \nabla f(x_{t}) ]\|^2] \nonumber \\
    & + 3 K^2\mathbb{E}_t [ \|\nabla f(x_{t}) \|^2] \nonumber\\
    & \leq \frac{3}{N} \sum_{i\in [N]}\mathbb{E}_t [\|\sum_{k=0}^{K-1}[ \nabla f_i(x_{t,k}^i) - \nabla f_i(x_{t}) ]\|^2]\nonumber\\
    & + 3\underbrace{\mathbb{E}_t [ \|\frac{1}{M} \sum_{i\in \mathcal{S}_t}\sum_{k=0}^{K-1}[\nabla f_i(x_{t}) -  \nabla f(x_{t}) ]\|^2]}_{T_{12}} \nonumber\\
    & + 3 K^2\mathbb{E}_t [ \|\nabla f(x_{t}) \|^2],  \label{ineq_t11}
\end{align}
where the last inequality is due to the server's uniformly selection without replacement. Next we need to bound $T_{12}$. For convenience, we utilize $\delta_t^i = \sum_{k=0}^{K-1} \nabla f_i(x_t)$ and  $\delta_t = \sum_{k=0}^{K-1} \nabla f(x_t)$ in the following step.
\begin{align}
    &T_{12} = \mathbb{E}_t [\|\frac{1}{M} \sum_{i\in \mathcal{S}_t}\delta_t^i - \delta_t  \|^2] \nonumber\\
    & = \frac{1}{M^2}\mathbb{E}_t [\| \sum_{i\in [N]}(\mathcal{I}(i\in \mathcal{S}_t))(\delta_t^i - \delta_t) \|^2] \nonumber\\
    & = \frac{1}{M^2}\mathbb{E}_t [ \sum_{i\in [N]}(\mathcal{I}(i\in \mathcal{S}_t))^2 \|\delta_t^i - \delta_t\|^2] \nonumber\\
    & + \frac{1}{M^2}\mathbb{E}_t[\sum_{i\in [N]}\sum_{j \ne i\in [N]} \mathcal{I}(i\in \mathcal{S}_t)\mathcal{I}(j\in \mathcal{S}_t)\bigg< \delta_t^i - \delta_t, \delta_t^j - \delta_t\bigg>] \nonumber\\
    & = \frac{1}{M^2} \frac{M}{N}\mathbb{E}_t [ \sum_{i\in [N]}\|\delta_t^i - \delta_t\|^2] \nonumber\\
    & + \frac{M(M-1)}{N(N-1)}\frac{1}{M^2}\mathbb{E}_t[\|\sum_{i \in [N]} (\delta_t^i - \delta_t)\|^2] \nonumber\\
    & - \frac{M(M-1)}{N(N-1)}\frac{1}{M^2}\mathbb{E}_t [ \sum_{i\in [N]}\|\delta_t^i - \delta_t\|^2] \nonumber\\
    & = \frac{N-M}{MN(N-1)} \sum_{i\in [N]}\mathbb{E}_t [ \|\delta_t^i - \delta_t\|^2] \nonumber\\
    & \leq \frac{(N-M)K}{MN(N-1)} \sum_{i\in [N]}\sum_{k =0}^{K-1}\mathbb{E}_t [ \|\nabla f_i(x_t) - \nabla f(x_t)\|^2] \nonumber\\
    & \leq \frac{(N-M)K^2}{M(N-1)} \sigma_g^2,
\end{align}
where the second equality is due to the server's uniformly selection without replacement and third equality is due to $\sum_{i \in [N]} (\delta_t^i - \delta_t) = 0$. \textcolor{black}{Then substituting the result to (\ref{ineq_t11}), we have:
\begin{align}
    &T_{11} \leq \frac{3}{N} \sum_{i\in [N]}\mathbb{E}_t [\|\sum_{k=0}^{K-1}[ \nabla f_i(x_{t,k}^i) - \nabla f_i(x_{t}) ]\|^2]\nonumber\\
    & + 3\frac{(N-M)K^2}{M(N-1)} \sigma_g^2  + 3 K^2\mathbb{E}_t [ \|\nabla f(x_{t}) \|^2] \nonumber\\
    & \leq  \frac{9\eta^2 L^2 K^3 (\sigma_l^2+3K\sigma_g^2) }{2} + \frac{3K^2}{2}\mathbb{E}_t [\|\nabla f(x_{t})\|^2] \nonumber\\
    & + 3\frac{(N-M)K^2}{M(N-1)} \sigma_g^2  + 3 K^2\mathbb{E}_t [ \|\nabla f(x_{t}) \|^2] \nonumber\\
    & = \frac{9\eta^2 L^2 K^3 (\sigma_l^2+3K\sigma_g^2) }{2} \nonumber\\
    & + \frac{9K^2}{2}\mathbb{E}_t [\|\nabla f(x_{t})\|^2] + 3\frac{(N-M)K^2}{M(N-1)} \sigma_g^2.
\end{align}}
\textcolor{black}{
Substituting the results of $T_{10}$ and $T_{11}$, we have:
\begin{align}
    &T_{9} \leq  \frac{3\gamma^3 E^2 L^2 \sigma^2}{m_s} + 3\gamma^3E^2 L^4\sum_{e=0}^{E-1}\mathbb{E}_t[\|x_{t+1, e}^s -  x_{t}\|^2] \nonumber\\
    & + 3\gamma^3E^3 L^2\mathbb{E}_t[\| \nabla f(x_{t})\|^2] + \eta^2 \eta_g^2\gamma E L^2 \frac{9\eta^2 L^2 K^3 (\sigma_l^2+3K\sigma_g^2) }{2} \nonumber\\
    & + (\eta^2 \eta_g^2\gamma E L^2)\frac{9K^2}{2}\mathbb{E}_t [\|\nabla f(x_{t})\|^2] \nonumber\\
    &+ 3(\eta^2 \eta_g^2\gamma E L^2)\frac{(N-M)K^2}{M(N-1)} \sigma_g^2 + \frac{\eta^2 \eta_g^2\gamma E L^2 K \sigma_l^2}{M} \nonumber\\
    & \leq  \frac{3\gamma^3 E^2 L^2 \sigma^2}{m_s} + 3\gamma^3E^2 L^4\sum_{e=0}^{E-1}\mathbb{E}_t[\|x_{t+1, e}^s -  x_{t}\|^2] + \frac{\eta^2 \eta_g^2\gamma E L^2 K \sigma_l^2}{M}\nonumber\\
    & + 3\gamma^3E^3 L^2\mathbb{E}_t[\| \nabla f(x_{t})\|^2] + \frac{3\eta^4\eta_g^2 L^3 K^3(\sigma_l^2 + 3K\sigma_g^2) }{4} \nonumber\\
    & + \frac{3\eta^2 \eta_g^2LK^2}{4}\mathbb{E}_t [\|\nabla f(x_{t})\|^2] + \frac{\eta^2 \eta_g^2L(N-M)K^2}{2M(N-1)} \sigma_g^2 \nonumber\\
    & \leq  \frac{3\gamma^3 E^2 L^2 \sigma^2}{(1-\mathcal{A})m_s} + \frac{\mathcal{A}\gamma E}{2(1-\mathcal{A})} \mathbb{E}_t[\| \nabla f(x_{t})\|^2] \nonumber\\
    & + \frac{3\eta^4\eta_g^2 L^3 K^3(\sigma_l^2 + 3K\sigma_g^2)  }{4(1-\mathcal{A})} + \frac{\eta^2 \eta_g^2\gamma E L^2 K \sigma_l^2}{M(1- \mathcal{A})}\nonumber\\
    & + \frac{3\eta^2 \eta_g^2LK^2}{4(1-\mathcal{A})}\mathbb{E}_t [\|\nabla f(x_{t})\|^2] + \frac{\eta^2 \eta_g^2L(N-M)K^2}{2M(N-1)(1-\mathcal{A})} \sigma_g^2 \nonumber\\
    & \leq  \frac{18\gamma^3 E^2 L^2 \sigma^2}{5m_s} + \frac{\gamma E}{10} \mathbb{E}_t[\| \nabla f(x_{t})\|^2] \nonumber\\
    & + \frac{9\eta^4\eta_g^2 L^3 K^3(\sigma_l^2+3K\sigma_g^2)}{10} + \frac{9\eta^2 \eta_g^2LK^2}{10}\mathbb{E}_t [\|\nabla f(x_{t})\|^2] \nonumber\\
    & + \frac{3\eta^2 \eta_g^2L(N-M)K^2}{5M(N-1)} \sigma_g^2  + \frac{\eta^2 \eta_g^2 L K \sigma_l^2}{5M},
\end{align}
where $\mathcal{A} = 6\gamma^2L^2E^2$ and let $\gamma \leq \frac{1}{6LE}$ such that $\mathcal{A} \leq \frac{1}{6}$, $\frac{1}{1 - \mathcal{A}} \leq \frac{6}{5}$ and $\frac{\mathcal{A}}{1 - \mathcal{A}} \leq \frac{1}{5}$. Substituting the above result to (\ref{ineq_t4}), we have:
\begin{align}
    & T_4 \leq  \frac{18\gamma^3 E^2 L^2 \sigma^2}{5m_s} - \frac{2\gamma E}{5} \mathbb{E}_t[\| \nabla f(x_{t})\|^2] \nonumber\\
    &+ \frac{9\eta^4\eta_g^2 L^3 K^3 (\sigma_l^2+3K\sigma_g^2)}{10}  + \frac{9\eta^2 \eta_g^2LK^2}{10}\mathbb{E}_t [\|\nabla f(x_{t})\|^2] \nonumber\\
    &+ \frac{3\eta^2 \eta_g^2L(N-M)K^2}{5M(N-1)} \sigma_g^2 - \frac{\gamma}{2E}\mathbb{E}_t[\|\sum_{e=0}^{E-1} \nabla f(x_{t+1,e}^s)\|^2]\nonumber\\
    &+ \frac{\eta^2 \eta_g^2 L K \sigma_l^2}{5M}.
\end{align}
Combing the inequalities of $T_3$ and $T_4$, we can then bound $T_1$ as:
\begin{align}
    &T_1 \leq \frac{3\eta^3\eta_g L^2 K^2 (\sigma_l^2+3K\sigma_g^2)}{4} - (\frac{2\gamma E}{5}+\frac{\eta\eta_g K}{4})\|\nabla f(x_{t})\|^2 \nonumber\\
    & - \frac{\eta\eta_g}{2K}\mathbb{E}_t [\|\frac{1}{N} \sum_{i\in [N]}\sum_{k=0}^{K-1} \nabla f_i(x_{t,k}^i)\|^2]  + \frac{18\gamma^3 E^2 L^2 \sigma^2}{5m_s} \nonumber\\
    & + \frac{9\eta^4\eta_g^2 L^3 K^3(\sigma_l^2+3K\sigma_g^2) }{10} + \frac{9\eta^2 \eta_g^2LK^2}{10}\mathbb{E}_t [\|\nabla f(x_{t})\|^2]\nonumber\\
    &  + \frac{3\eta^2 \eta_g^2L(N-M)K^2}{5M(N-1)} \sigma_g^2 - \frac{\gamma}{2E}\mathbb{E}_t[\|\sum_{e=0}^{E-1} \nabla f(x_{t+1,e}^s)\|^2]\nonumber\\
    &  + \frac{\eta^2 \eta_g^2L K \sigma_l^2}{5M}.
\end{align}
The rest term $T_2$ can be bounded as:
\begin{align}
    & T_2 \leq \eta^2\eta_g^2L\underbrace{\mathbb{E}_t [\|\frac{1}{M} \sum_{i\in \mathcal{S}_t}\sum_{k=0}^{K-1} \nabla f_i(x_{t,k}^i) \|^2] }_{T_{11}} + \frac{\eta^2\eta_g^2LK \sigma_l^2}{M}\nonumber\\
    & + \gamma^2L\underbrace{\mathbb{E}_t [\|\sum_{e=0}^{E-1} \nabla f_s(x_{t+1,e}^s) \|^2]}_{T_{13}}. \label{ineq_t2}
\end{align}}

The only rest term is $T_{13}$, which can be bounded as:
\begin{align}
    &T_{13} = \mathbb{E}_t [\|\sum_{e=0}^{E-1} (\nabla f_s(x_{t+1,e}^s) - \nabla f(x_{t+1,e}^s) + \nabla f(x_{t+1,e}^s))\|^2] \nonumber\\
    & = \mathbb{E}_t [\|\sum_{e=0}^{E-1} (\nabla f_s(x_{t+1,e}^s) - \nabla f(x_{t+1,e}^s))\|^2]  \nonumber\\
    & + \mathbb{E}_t [\|\sum_{e=0}^{E-1} \nabla f(x_{t+1,e}^s)\|^2]  \leq \frac{ E \sigma^2}{m_s} + \mathbb{E}_t [\|\sum_{e=0}^{E-1} \nabla f(x_{t+1,e}^s)\|^2],
\end{align}
where the third equality is due to the fact that  $\mathbb{E}[\| x \|^2] = \mathbb{E}[\| x - \mathbb{E}[x] \|^2] + \| \mathbb{E}[x] \|^2$ and the last inequality is due to assumption \ref{assm:unbiased-local} and  $\mathbb{E} [\|x_1 + \cdots + x_n \|^2] \leq n \mathbb{E}[\|x_1\|^2 + \cdots + \| x_n \|^2]$.
\textcolor{black}{
Substituting the result of $T_{11}$ and $T_{13}$, we can finally bound $T_{2}$ as:
\begin{align}
    &T_2 \leq \frac{9\eta^4 \eta_g^2 L^3 K^3(\sigma_l^2+3K\sigma_g^2) }{2} + \frac{9K^2\eta^2\eta_g^2L}{2}\mathbb{E}_t [\|\nabla f(x_{t})\|^2] \nonumber\\
    & + 3\frac{(N-M)K^2\eta^2\eta_g^2L}{M(N-1)} \sigma_g^2 + \frac{ \gamma^2L E \sigma^2}{m_s} \nonumber\\
    & + \gamma^2L\mathbb{E}_t [\|\sum_{e=0}^{E-1} \nabla f(x_{t+1,e}^s)\|^2] + \frac{\eta^2\eta_g^2LK \sigma_l^2}{M}.
\end{align}}

\textcolor{black}{
With both $T_1$ and $T_2$ bounded, we finally have:
\begin{align}
     & \mathbb{E}_t [f(x_{t+1})] \leq f(x_t) - (\frac{2\gamma E}{5}+\frac{\eta\eta_g K}{20})\|\nabla f(x_{t})\|^2 + \frac{8\gamma^2 E L \sigma^2}{5m_s}\nonumber\\
     & + \frac{57\eta^3\eta_g L^2 K^3\sigma_g^2}{20} + \frac{18\eta^2 \eta_g^2L(N-M)K^2}{5M(N-1)} \sigma_g^2 \nonumber\\
     & + \frac{19\eta^3\eta_g L^2 K^2\sigma_l^2}{20} + \frac{6\eta^2\eta_g^2LK \sigma_l^2}{5M},
\end{align}
where $\gamma \leq \frac{1}{6EL}$, $\eta \leq \frac{1}{3KL}$ and $\eta \eta_g \leq \frac{1}{27KL}$.}
\textcolor{black}{
Rearranging and summing from $t = 0, \dots, T-1$, we have the convergence as:
% \begin{align}
%     &\sum_{t=0}^{T-1}(\frac{8\gamma E + \eta\eta_g K}{20}) \|\nabla f(x_{t})\|^2 \leq f(x_0) - f(x_T) \nonumber\\
%     & + T(\frac{57\eta^3\eta_g L^2 K^3\sigma_g^2}{20} + \frac{8\gamma^2 E L \sigma^2}{5m_s} + \frac{18(N-M)K^2\eta^2\eta_g^2L}{5M(N-1)} \sigma_g^2\nonumber\\
%     & + \frac{19\eta^3\eta_g L^2 K^2\sigma_l^2}{20} + \frac{6\eta^2\eta_g^2LK \sigma_l^2}{5M}),
% \end{align}
% which implies that,
\begin{align}
    & \min_{t \in [T]} \mathbb{E}\|\nabla f(x_t)\|_2^2 = \mathcal{O}\bigg(\frac{(f_0 - f_*)}{T(\gamma E + \eta\eta_g K)}\bigg) + \mathcal{O}\bigg(\frac{\eta^3\eta_g L^2 K^3\sigma_g^2}{\gamma E + \eta\eta_g K}\bigg)  \nonumber\\
    &  + \mathcal{O}\bigg(\frac{\gamma^2 E L \sigma^2}{m_s(\gamma E + \eta\eta_g K)} \bigg) +  \mathcal{O}\bigg(\frac{(N-M)K^2\eta^2\eta_g^2L\sigma_g^2}{M(N-1)(\gamma E + \eta\eta_g K)}\bigg) \nonumber\\
    &  + \underbrace{\mathcal{O}\bigg(\frac{\eta^3\eta_g L^2 K^2\sigma_l^2}{\gamma E + \eta\eta_g K}\bigg) +\mathcal{O}\bigg(\frac{\eta^2\eta_g^2LK \sigma_l^2}{M(\gamma E + \eta\eta_g K)}\bigg)}_{\text{stochastic gradient error}} ,
\end{align}
where $f_0 = f(x_0)$, $f_* = f(x_*)$.}

% Then let $\eta = \Theta(\frac{1}{K\sqrt{T}})$, $\eta_g = \Theta(\sqrt{MK})$ and $\gamma = \Theta(\frac{1}{\sqrt{ET}})$, we have:
% \begin{align}
%      & \min_{t \in [T]} \mathbb{E}\|\nabla f(x_t)\|_2^2 \leq \mathcal{O}\bigg(\frac{1}{(\sqrt{MK}+\sqrt{E})\sqrt{T}}\bigg) \nonumber\\
%      & + \mathcal{O}\bigg(\frac{\sqrt{MK}}{(\sqrt{MK}+\sqrt{E})T}\bigg) + \mathcal{O}\bigg(\frac{K}{(\sqrt{MK}+\sqrt{E})\sqrt{T}}\bigg) 
% \end{align}
Now we finish the proof of theorem \ref{th1}.

\section{Proof of Theorem 2}
\label{proof-t2}
\textcolor{black}{
After each global round $t$, we have the new model $x_{t+1}$ as:
\begin{align}
    & x_{t+1}  =  x_{t} - \eta\eta_g \frac{1}{M} \sum_{i\in \mathcal{S}_t}\sum_{k=0}^{K-1} (g_{t,k}^i + \nabla f_s(x_t) - g_t^i)   \nonumber\\
    & - \gamma\sum_{e=0}^{E-1} \nabla f_s(x_{t+1,e}^s).
\end{align}
Similar to the proof of 1, due to the smoothness in Assumption 1, taking the expectation of $f(x_{t+1})$ over the randomness at communication round $t$, we have:
\begin{align}
    \mathbb{E}_t [f(x_{t+1})] &\leq f(x_t) + \underbrace{\bigg< \nabla f(x_t), \mathbb{E}_t [x_{t+1} - x_t] \bigg>}_{C_1}\nonumber\\
    & + \underbrace{\frac{L}{2} \mathbb{E}_t [\| x_{t+1} - x_t \|^2]}_{C_2}. \label{ineq_smooth_c}
\end{align}
Focusing on $\mathbb{E}_t [x_{t+1} - x_t]$ in the term $C_1$, we can find that:
\begin{align}
    & \mathbb{E}_t [x_{t+1} - x_t]\nonumber\\
    & = -\mathbb{E}_t [\eta\eta_g \frac{1}{M} \sum_{i\in \mathcal{S}_t}\sum_{k=0}^{K-1} \nabla f_i(x_{t,k}^i) + \gamma\sum_{e=0}^{E-1} \nabla f_s(x_{t+1,e}^s)] \nonumber\\
    & - \underbrace{\mathbb{E}_t[\eta\eta_g \frac{1}{M} \sum_{i\in \mathcal{S}_t}\sum_{k=0}^{K-1}(\nabla f_s(x_t) - g_{t}^i)]}_{=0}
\end{align}
In the following proof steps, some of the steps are similar to the previous proof of 1. We will skip the intermediate steps here. We can bound $C_1$ similar to $T_1$ as:
\begin{align}
    & C_1 = \underbrace{-\eta\eta_g\bigg< \nabla f(x_t), \mathbb{E}_t [\frac{1}{M} \sum_{i\in \mathcal{S}_t}\sum_{k=0}^{K-1} \nabla f_i(x_{t,k}^i)]\bigg>}_{C_3}\nonumber\\
    & \underbrace{- \gamma\bigg<\nabla f(x_t), \mathbb{E}_t[\sum_{e=0}^{E-1} \nabla f_s(x_{t+1,e}^s)] \bigg>}_{C_4}.
\end{align}
The term $C_3$ can be bounded as follows:
\begin{align}
    & C_3 = \underbrace{\frac{\eta\eta_g}{2K}\mathbb{E}_t [\|\frac{1}{N} \sum_{i\in [N]}\sum_{k=0}^{K-1} \nabla f_i(x_{t,k}^i) - K\nabla f(x_{t})\|^2]}_{C_5} \nonumber\\
     & - \frac{\eta\eta_g K}{2}\|\nabla f(x_{t})\|^2 - \frac{\eta\eta_g}{2K}\mathbb{E}_t [\|\frac{1}{N} \sum_{i\in [N]}\sum_{k=0}^{K-1} \nabla f_i(x_{t,k}^i)\|^2]. \label{ineq_c3}
\end{align}
}
Then we can bound $C_5$ as:
\begin{align}
    &C_5 \leq \frac{\eta\eta_g L^2}{2N}\sum_{i\in [N]}\sum_{k=0}^{K-1} \underbrace{\mathbb{E}_t [\| x_{t,k}^i - x_{t}\|^2] }_{C_6}
\end{align}
\textcolor{black}{
Next, consider $\eta \leq \frac{1}{8LK}$, the term $C_6$ can be bounded as:
\begin{align}
     &C_6  = \mathbb{E}_t [\|x_{t,k-1}^i - \eta(g_{t, k-1}^i - \nabla f_i (x_{t,k-1}^i) + \nabla f_i (x_{t,k-1}^i) \nonumber\\
     & + \nabla f_s (x_t)  - g_t^i + \nabla f_i (x_t) - \nabla f_i (x_t)) - x_{t}\|^2] \nonumber\\
     & \leq (1 + \frac{1}{2K-1}) \mathbb{E}_t [\|x_{t,k-1}^i - x_t\|^2] +  2\eta^2\sigma_l^2\nonumber\\ 
     & +  2K \eta ^2 \mathbb{E}_t [\|\nabla f_i (x_{t,k-1}^i) + \nabla f_s (x_t)  - \nabla f_i (x_t)\|^2] \nonumber\\
     & = (1 + \frac{1}{2K-1}) \mathbb{E}_t [\|x_{t,k-1}^i - x_t\|^2] +  2\eta^2\sigma_l^2\nonumber\\ 
     & +  2K \eta ^2 \mathbb{E}_t [\|\nabla f_i (x_{t,k-1}^i) + \nabla f_s (x_t)  - \nabla f_i (x_t)  - \nabla f (x_t) + \nabla f (x_t)\|^2] \nonumber\\
     & \leq (1 + \frac{1}{2K-1}) \mathbb{E}_t [\|x_{t,k-1}^i - x_t\|^2] +  6K \eta ^2 L^2 \mathbb{E}_t [\|x_{t,k-1}^i - x_t\|^2] \nonumber\\ 
     & +  6K \eta ^2 \mathbb{E}_t [\|\nabla f_s (x_t) - \nabla f (x_t)\|^2]  +  6K \eta ^2 \mathbb{E}_t [\|\nabla f (x_t)\|^2] + 2\eta^2\sigma_l^2\nonumber\\
     & \leq (1 + \frac{1}{2K-1} + 6K \eta ^2 L^2) \mathbb{E}_t [\|x_{t,k-1}^i - x_t\|^2] \nonumber\\
     & + \frac{6K \eta ^2 \sigma^2}{m_s} +  6K \eta ^2 \mathbb{E}_t [\|\nabla f (x_t)\|^2]+ 2\eta^2\sigma_l^2\nonumber\\
     & \leq (1 + \frac{1}{K-1}) \mathbb{E}_t [\|x_{t,k-1}^i - x_t\|^2] + \frac{6K \eta ^2 \sigma^2}{m_s}\nonumber\\
     & +  6K \eta ^2 \mathbb{E}_t [\|\nabla f (x_t)\|^2] + 2\eta^2\sigma_l^2
\end{align}
We then average over each client $i$, and then have:
\begin{align}
     &\frac{1}{N} \sum_{i\in [N]}\mathbb{E}_t [\| x_{t,k}^i - x_{t}\|^2]  \leq (1 + \frac{1}{K-1}) \frac{1}{N} \sum_{i\in [N]}\mathbb{E}_t [\|x_{t,k-1}^i - x_t\|^2] \nonumber\\
     & + \frac{6K \eta ^2 \sigma^2}{m_s} +  6K \eta ^2 \mathbb{E}_t [\|\nabla f (x_t)\|^2] + 2\eta^2\sigma_l^2\nonumber\\
     & \leq (K-1) [(1 + \frac{1}{K-1})^K - 1][ \frac{6K \eta ^2 \sigma^2}{m_s} \nonumber\\
     & + 2\eta^2\sigma_l^2 + 6K \eta ^2 \mathbb{E}_t [\|\nabla f (x_t)\|^2]]\nonumber\\
     &\leq \frac{30K^2 \eta ^2 \sigma^2}{m_s} +  30K^2 \eta ^2 \mathbb{E}_t [\|\nabla f (x_t)\|^2] + 10K \eta^2\sigma_l^2
\end{align}
Substituting to $C_5$, we have:
\begin{align}
    &C_5 \leq \frac{\eta\eta_g L^2}{2N}\sum_{i\in [N]}\sum_{k=0}^{K-1}\mathbb{E}_t [\| x_{t,k}^i - x_{t}\|^2]\nonumber\\
    & \leq \frac{\eta\eta_g L^2}{2}\sum_{k=0}^{K-1}( \frac{30K^2 \eta ^2 \sigma^2}{m_s} +  30K^2 \eta ^2 \mathbb{E}_t [\|\nabla f (x_t)\|^2] + 10K \eta^2\sigma_l^2) \nonumber\\
    & \leq  \frac{15K^3L^2 \eta ^3 \eta_g \sigma^2}{m_s} +  \frac{\eta \eta_g K}{4} \mathbb{E}_t [\|\nabla f (x_t)\|^2] + 5K^2L^2\eta^3\eta_g\sigma_l^2
\end{align}
}

% \begin{align}
%     & C_6 = \frac{\eta\eta_g L^2}{2N}\sum_{i\in [N]}\sum_{k=0}^{K-1} \mathbb{E}_t [\| x_{t,k}^i - x_{t}\|^2] \nonumber\\
%     & =  \frac{\eta\eta_g L^2}{2N}\sum_{i\in [N]}\sum_{k=0}^{K-1} \mathbb{E}_t [\| \eta \sum_{\tau = 0}^k (\nabla f_i(x_{t, \tau}^i) + \nabla f_s(x_t) - \nabla f_i(x_t))\|^2] \nonumber \\
%     & \leq \frac{3\eta\eta_g L^2}{2N}\sum_{i\in [N]}\sum_{k=0}^{K-1} \mathbb{E}_t [\| \eta \sum_{\tau = 0}^k \nabla f_i(x_{t, \tau}^i) \|^2] \nonumber \\
%     & + \frac{3\eta\eta_g L^2}{2N}\sum_{i\in [N]}\sum_{k=0}^{K-1} \mathbb{E}_t [\| \eta \sum_{\tau = 0}^k ( \nabla f_s(x_t) - \nabla f(x_t))\|^2] \nonumber \\
%     & + \frac{3\eta\eta_g L^2}{2N}\sum_{i\in [N]}\sum_{k=0}^{K-1} \mathbb{E}_t [\| \eta \sum_{\tau = 0}^k  (\nabla f(x_t) - \nabla f_i(x_t))\|^2] \nonumber \\
%     & \leq \frac{3\eta^3\eta_g L^2 K^2}{2} \sum_{k=0}^{K-1} \underbrace{\frac{1}{N}\sum_{i\in [N]}\mathbb{E}_t [\|\nabla f_i(x_{t, k}^i)\|^2]}_{C_7} \nonumber\\
%     & + \frac{3\eta^3\eta_g L^2 K^3\sigma^2}{2m_s}  + \frac{3\eta^3\eta_g L^2 K^3 \sigma_g^2}{2}, 
% \label{ineq_c6}
% \end{align}
\textcolor{black}{
Then the term $C_3$ can be bounded as:
\begin{align}
    &C_3 \leq \frac{15K^3L^2 \eta ^3 \eta_g \sigma^2}{m_s}  - \frac{\eta\eta_g K}{4}\|\nabla f(x_{t})\|^2 + 5K^2L^2\eta^3\eta_g\sigma_l^2\nonumber\\
    &- \frac{\eta\eta_g}{2K}\mathbb{E}_t [\|\frac{1}{N} \sum_{i\in [N]}\sum_{k=0}^{K-1} \nabla f_i(x_{t,k}^i)\|^2].
\end{align}
}
\textcolor{black}{
The remaining term in $C_1$ is $C_4$ which can be bounded as:
\begin{align}
    & C_4 = \underbrace{\frac{\gamma}{2E} \mathbb{E}_t[\|\sum_{e=0}^{E-1} \nabla f(x_{t+1,e}^s) - E\nabla f(x_t)\|^2]}_{C_7} \nonumber\\
    &- \frac{\gamma E}{2}\|\nabla f(x_{t})\|^2 - \frac{\gamma}{2E}\mathbb{E}_t[\|\sum_{e=0}^{E-1} \nabla f(x_{t+1,e}^s)\|^2]. \label{ineq_c4}
\end{align}
Then the term $C_7$ is bounded by:
\begin{align}
    & C_7 \leq \underbrace{\frac{\gamma L^2}{2}\sum_{e=0}^{E-1} \mathbb{E}_t[\| x_{t+1,e}^s - x_t\|^2]}_{C_8} \nonumber\\
    & \leq \underbrace{{\gamma L^2}\sum_{e=0}^{E-1} \mathbb{E}_t[\| x_{t+1,e}^s - x_{t+1}^s\|^2]}_{C_{9}} + \eta^2 \eta_g^2\gamma E L^2\nonumber\\
    &  \underbrace{\mathbb{E}_t [\|\frac{1}{M} \sum_{i\in \mathcal{S}_t}\sum_{k=0}^{K-1} (\nabla f_i(x_{t,k}^i)+\nabla f_s(x_t)-\nabla f_i(x_t)) \|^2]}_{C_{10}}\nonumber\\
    & +  \frac{2\eta^2 \eta_g^2\gamma E L^2K \sigma_l^2}{M}
\end{align}
Then bounding $C_{9}$ which is same as $T_{10}$, we have:
\begin{align}
    & C_{9} \leq \frac{3\gamma^3 E^2 L^2 \sigma^2}{m_s} + 3\gamma^3E^2 L^4\sum_{e=0}^{E-1}\mathbb{E}_t[\|x_{t+1, e}^s -  x_{t}\|^2] \nonumber\\
    & + 3\gamma^3E^2 L^2\sum_{e=0}^{E-1}\mathbb{E}_t[\| \nabla f(x_{t})\|^2],
\end{align}
Then to bound the term $C_{10}$, we have:
\begin{align}
    & C_{10} \leq 3\mathbb{E}_t [ \|\frac{1}{M} \sum_{i\in \mathcal{S}_t}\sum_{k=0}^{K-1}[ \nabla f_i(x_{t,k}^i) - \nabla f_i(x_{t}) ]\|^2]\nonumber\\
    & + 3\mathbb{E}_t [ \|\frac{1}{M} \sum_{i\in \mathcal{S}_t}\sum_{k=0}^{K-1}[\nabla f_s(x_{t}) -  \nabla f(x_{t}) ]\|^2] \nonumber\\
    & + 3\mathbb{E}_t [ \|\frac{1}{M} \sum_{i\in \mathcal{S}_t}\sum_{k=0}^{K-1} \nabla f(x_{t}) \|^2] \nonumber\\
    & \leq \frac{3}{N} \sum_{i\in [N]}\mathbb{E}_t [\|\sum_{k=0}^{K-1}[ \nabla f_i(x_{t,k}^i) - \nabla f_i(x_{t}) ]\|^2]\nonumber\\
    & + 3\mathbb{E}_t [ \|\frac{1}{M} \sum_{i\in \mathcal{S}_t}\sum_{k=0}^{K-1}[\nabla f_s(x_{t}) -  \nabla f(x_{t}) ]\|^2] \nonumber\\
    & + 3 K^2\mathbb{E}_t [ \|\nabla f(x_{t}) \|^2] \nonumber\\
    & = \frac{9K^2}{2}\mathbb{E}_t [\|\nabla f(x_{t})\|^2] +  \frac{90K^4L^2 \eta ^2 \sigma^2}{m_s}\nonumber\\
    & + 30 K^3 L^2 \eta^2 \sigma_l^2 + \frac{3K\sigma^2}{M m_s},
\end{align}
where the second term in the second equality is due to the server's uniformly selection without replacement and the fact $\mathbb{E} [\|x_1 + \cdots + x_n \|^2] = \mathbb{E}[\|x_1\|^2 + \cdots + \| x_n \|^2]$ if $x_i$ is independent with zero mean.
Substituting the results of $C_{9}$ and $C_{10}$, we have:
\begin{align}
    & C_8 \leq \frac{3\gamma^3 E^2 L^2 \sigma^2}{m_s} + 3\gamma^3E^2 L^4\sum_{e=0}^{E-1}\mathbb{E}_t[\|x_{t+1, e}^s -  x_{t}\|^2] \nonumber\\
    & + 3\gamma^3E^3 L^2\mathbb{E}_t[\| \nabla f(x_{t})\|^2] + \frac{9}{2}\eta^2 \eta_g^2\gamma E L^2 K^2\mathbb{E}_t [\|\nabla f(x_{t})\|^2] \nonumber\\
    & + \frac{90\eta^4 \eta_g^2\gamma E L^4 K^4\sigma^2}{m_s} + \frac{3\eta^2 \eta_g^2\gamma E L^2 K\sigma^2}{M m_s} \nonumber\\
    & + 5 \eta^4 \eta_g^2 L^3 K^3 \sigma_l^2 + \frac{\eta^2 \eta_g^2 LK \sigma_l^2}{3M} \nonumber\\
    & \leq \frac{3\gamma^3 E^2 L^2 \sigma^2}{(1-\mathcal{A})m_s} + \frac{\mathcal{A}\gamma E}{2(1 - \mathcal{A})}\mathbb{E}_t[\| \nabla f(x_{t})\|^2] \nonumber\\
    & + \frac{3\eta^2 \eta_g^2 L K^2}{4(1 - \mathcal{A})}\mathbb{E}_t [\|\nabla f(x_{t})\|^2] + \frac{15\eta^4 \eta_g^2 L^3 K^4\sigma^2}{m_s(1 - \mathcal{A})} + \frac{\eta^2 \eta_g^2 L K\sigma^2}{2M m_s(1 - \mathcal{A})} \nonumber\\
    & + \frac{5 \eta^4 \eta_g^2 L^3 K^3 \sigma_l^2}{1-\mathcal{A}} + \frac{\eta^2 \eta_g^2 LK \sigma_l^2}{3M(1-\mathcal{A})} \nonumber\\
    & \leq \frac{18\gamma^3 E^2 L^2 \sigma^2}{5m_s} + \frac{\gamma E}{10}\mathbb{E}_t[\| \nabla f(x_{t})\|^2] + \frac{3\eta^2 \eta_g^2 L K\sigma^2}{5M m_s}\nonumber\\
    & + \frac{9\eta^2 \eta_g^2 L K^2}{10}\mathbb{E}_t [\|\nabla f(x_{t})\|^2] + \frac{18\eta^4 \eta_g^2 L^3 K^4\sigma^2}{m_s} \nonumber\\
    & + 6 \eta^4 \eta_g^2 L^3 K^3 \sigma_l^2 + \frac{2\eta^2 \eta_g^2 LK \sigma_l^2}{5M} 
\end{align}
where $\mathcal{A} = 6\gamma^2L^2E^2$ and let $\gamma \leq \frac{1}{6LE}$ such that $\mathcal{A} \leq \frac{1}{6}$, $\frac{1}{1 - \mathcal{A}} \leq \frac{6}{5}$ and $\frac{\mathcal{A}}{1 - \mathcal{A}} \leq \frac{1}{5}$. Substituting the above result to $C_4$, we have:
\begin{align}
    &C_4 \leq \frac{18\gamma^3 E^2 L^2 \sigma^2}{5m_s} - \frac{2\gamma E}{5}\mathbb{E}_t[\| \nabla f(x_{t})\|^2] + \frac{3\eta^2 \eta_g^2 L K\sigma^2}{5M m_s}\nonumber\\
    & + \frac{9\eta^2 \eta_g^2 L K^2}{10}\mathbb{E}_t [\|\nabla f(x_{t})\|^2] + \frac{18\eta^4 \eta_g^2 L^3 K^4\sigma^2}{m_s}\nonumber\\
     & + 6 \eta^4 \eta_g^2 L^3 K^3 \sigma_l^2 + \frac{2\eta^2 \eta_g^2 LK \sigma_l^2}{5M} - \frac{\gamma}{2E}\mathbb{E}_t[\|\sum_{e=0}^{E-1} \nabla f(x_{t+1,e}^s)\|^2].
\end{align}
Combing the inequalities of $C_3$ and $C_4$, we can then bound $C_1$ as:
\begin{align}
    & C_1 \leq  \frac{15K^3L^2 \eta ^3 \eta_g \sigma^2}{m_s} - (\frac{\eta\eta_g K}{4} + \frac{2\gamma E}{5})\|\nabla f(x_{t})\|^2 \nonumber\\
    &- \frac{\eta\eta_g}{2K}\mathbb{E}_t [\|\frac{1}{N} \sum_{i\in [N]}\sum_{k=0}^{K-1} \nabla f_i(x_{t,k}^i)\|^2] + \frac{18\gamma^3 E^2 L^2 \sigma^2}{5m_s} \nonumber\\
    & + \frac{3\eta^2 \eta_g^2 L K\sigma^2}{5M m_s} + \frac{9\eta^2 \eta_g^2 L K^2}{10}\mathbb{E}_t [\|\nabla f(x_{t})\|^2] \nonumber\\
    & + \frac{18\eta^4 \eta_g^2 L^3 K^4\sigma^2}{m_s}  - \frac{\gamma}{2E}\mathbb{E}_t[\|\sum_{e=0}^{E-1} \nabla f(x_{t+1,e}^s)\|^2]\nonumber\\
     & + 6 \eta^4 \eta_g^2 L^3 K^3 \sigma_l^2 + \frac{2\eta^2 \eta_g^2 LK \sigma_l^2}{5M} + 5K^2L^2\eta^3\eta_g \sigma_l^2.
\end{align}
}

\textcolor{black}{
Then we need to bound the term $C_2$ in \ref{ineq_smooth_c}:
\begin{align}
    & C_2 = \frac{L}{2} \mathbb{E}_t [\| \eta\eta_g \frac{1}{M} \sum_{i\in \mathcal{S}_t}\sum_{k=0}^{K-1} (g_{t,k}^i+\nabla f_s(x_t)-g_t^i) \nonumber\\
    & + \gamma\sum_{e=0}^{E-1} \nabla f_s(x_{t+1,e}^s) \|^2] \nonumber\\
    & \leq L\mathbb{E}_t [\| \eta\eta_g \frac{1}{M} \sum_{i\in \mathcal{S}_t}\sum_{k=0}^{K-1} (\nabla f_i(x_{t,k}^i)+\nabla f_s(x_t)-\nabla f_i(x_t)) \|^2] \nonumber\\
    & + \frac{2\eta^2\eta_g^2LK\sigma_l^2}{M}+ L\mathbb{E}_t [\|\gamma\sum_{e=0}^{E-1} \nabla f_s(x_{t+1,e}^s) \|^2]  \nonumber\\
    & \leq \eta^2\eta_g^2L\underbrace{\mathbb{E}_t [\|\frac{1}{M} \sum_{i\in \mathcal{S}_t}\sum_{k=0}^{K-1} (\nabla f_i(x_{t,k}^i)+\nabla f_s(x_t)-\nabla f_i(x_t)) \|^2] }_{C_{10}}\nonumber\\
    & + \gamma^2L\underbrace{\mathbb{E}_t [\|\sum_{e=0}^{E-1} \nabla f_s(x_{t+1,e}^s) \|^2]}_{C_{11}}  + \frac{2\eta^2\eta_g^2LK\sigma_l^2}{M}. \label{ineq_t2_c}
\end{align}
The bound of $C_{11}$ is exactly the same as the bound of $T_{13}$. Thus, we have:
\begin{align}
    &C_{11} \leq \frac{ E \sigma^2}{m_s} + \mathbb{E}_t [\|\sum_{e=0}^{E-1} \nabla f(x_{t+1,e}^s)\|^2].
\end{align}
Substituting the result of $C_{10}$ and $C_{11}$, we can finally bound $C_2$ as:
\begin{align}
    &C_2 \leq \frac{ 90\eta^4\eta_g^2 L^3 K^4\sigma^2}{m_s} + \frac{9}{2}K^2\eta^2\eta_g^2L\mathbb{E}_t [\|\nabla f(x_{t})\|^2] \nonumber\\
    & + \frac{3\eta^2\eta_g^2KL\sigma^2}{M m_s} + \frac{ \gamma^2L E \sigma^2}{m_s} + \gamma^2L\mathbb{E}_t [\|\sum_{e=0}^{E-1} \nabla f(x_{t+1,e}^s)\|^2] \nonumber\\
    &  + \frac{2\eta^2\eta_g^2LK\sigma_l^2}{M} + 30 K^3 L^3\eta^4\eta_g^2\sigma_l^2.
\end{align}
With both $C_1$ and $C_2$ bounded, we finally have:
\begin{align}
    & \mathbb{E}_t [f(x_{t+1})] \leq f(x_t) - (\frac{\eta\eta_g K}{10} + \frac{2\gamma E}{5})\|\nabla f(x_{t})\|^2 \nonumber\\
    & + \frac{18\eta^3\eta_g L^2 K^3\sigma^2}{m_s} + \frac{8\gamma^2 E L \sigma^2}{5m_s} + \frac{18\eta^2 \eta_g^2 L K\sigma^2}{5M m_s} \nonumber\\
    & + \frac{12\eta^2\eta_g^2LK\sigma_l^2}{5M} + 6\eta^3 \eta_g L^2 K^2 \sigma_L^2,
\end{align}
where $\gamma \leq \frac{1}{6EL}$, $\eta \leq \frac{1}{6KL}$ and $\eta \eta_g \leq \frac{1}{36KL}$.
Rearranging and summing from $t = 0, \dots, T-1$, we have the convergence as:
% \begin{align}
%     &\sum_{t=0}^{T-1}(\frac{4\gamma E + \eta\eta_g K}{10}) \|\nabla f(x_{t})\|^2 \leq f(x_0) - f(x_T) \nonumber\\
%     & + T\frac{18\eta^3\eta_g L^2 K^3\sigma^2}{m_s} + T\frac{18\eta^2\eta_g^2KL\sigma^2}{5M m_s} + T\frac{8 \gamma^2L E \sigma^2}{5m_s}\nonumber\\
%      &+ T\frac{12\eta^2\eta_g^2LK\sigma_l^2}{5M} + T6\eta^3 \eta_g L^2 K^2 \sigma_L^2,
% \end{align}
% which implies that,
\begin{align}
    & \min_{t \in [T]} \mathbb{E}\|\nabla f(x_t)\|_2^2 = \mathcal{O}\bigg(\frac{(f_0 - f_*)}{T(\gamma E + \eta\eta_g K)}\bigg)\ \nonumber\\
    & + \mathcal{O}\bigg(\frac{\eta^3\eta_g L^2 K^3\sigma^2}{m_s(\gamma E + \eta\eta_g K)}\bigg) + \mathcal{O}\bigg(\frac{\gamma^2L E \sigma^2}{m_s(\gamma E + \eta\eta_g K)} \bigg) \nonumber\\
    & + \mathcal{O}\bigg(\frac{\eta^2\eta_g^2KL\sigma^2}{M m_s(\gamma E + \eta\eta_g K)} \bigg)  + \mathcal{O}\bigg(\frac{\eta^3\eta_g L^2 K^2\sigma_l^2}{\gamma E + \eta\eta_g K}\bigg) \nonumber\\
    & +\mathcal{O}\bigg(\frac{\eta^2\eta_g^2LK \sigma_l^2}{M(\gamma E + \eta\eta_g K)}\bigg) ,
\end{align}
where $f_0 = f(x_0)$, $f_* = f(x_*)$.}

\section{Proof of Theorem 3}
\label{proof-t3}
\textcolor{black}{
After each global round $t$, we have the new model $x_{t+1}$ as:
\begin{align}
    & x_{t+1}  =  x_{t} - \eta\eta_g \frac{1}{M} \sum_{i\in \mathcal{S}_t}\sum_{k=0}^{K-1}g_{t,k}^i \nonumber\\
    & - \frac{K\eta\eta_g}{M}\sum_{i\in \mathcal{S}_t}( \nabla f_s(x_t) - g_t^i) - \gamma\sum_{e=0}^{E-1} \nabla f_s(x_{t+1,e}^s).
\end{align}
}
Due to the smoothness in Assumption 1, taking expectation of $f(x_{t+1})$ over the randomness at communication round $t$, we have:
\begin{align}
    \mathbb{E}_t [f(x_{t+1})] &\leq f(x_t) + \underbrace{\bigg< \nabla f(x_t), \mathbb{E}_t [x_{t+1} - x_t] \bigg>}_{S_1}\nonumber\\
    & + \underbrace{\frac{L}{2} \mathbb{E}_t [\| x_{t+1} - x_t \|^2]}_{S_2}. \label{ineq_smooth_s}
\end{align}
\textcolor{black}{
Similar to the proof of theorem 2, focusing on $\mathbb{E}_t [x_{t+1} - x_t]$ in the term $S_1$, we can find that:
\begin{align}
    & \mathbb{E}_t [x_{t+1} - x_t]\nonumber\\
    & = -\mathbb{E}_t [\eta\eta_g \frac{1}{M} \sum_{i\in \mathcal{S}_t}\sum_{k=0}^{K-1} \nabla f_i(x_{t,k}^i) + \gamma\sum_{e=0}^{E-1} \nabla f_s(x_{t+1,e}^s)] \nonumber\\
    & - \underbrace{\mathbb{E}_t[\frac{K\eta\eta_g}{M}\sum_{i\in \mathcal{S}_t}( \nabla f_s(x_t) - g_{t}^i)]}_{=0}
\end{align}
}
The term $S_1$ can be written as:
\begin{align}
    & S_1 = \underbrace{-\eta\eta_g\bigg< \nabla f(x_t), \mathbb{E}_t [\frac{1}{M} \sum_{i\in \mathcal{S}_t}\sum_{k=0}^{K-1} \nabla f_i(x_{t,k}^i)]\bigg>}_{S_3}\nonumber\\
    & \underbrace{- \gamma\bigg<\nabla f(x_t), \mathbb{E}_t[\sum_{e=0}^{E-1} \nabla f_s(x_{t+1,e}^s)] \bigg>}_{S_4}.
\end{align}

\textcolor{black}{
The term $S_3$ can be bounded as follows:
\begin{align}
    & S_3 = \underbrace{\frac{\eta\eta_g}{2K}\mathbb{E}_t [\|\frac{1}{N} \sum_{i\in [N]}\sum_{k=0}^{K-1} \nabla f_i(x_{t,k}^i) - K\nabla f(x_{t})\|^2]}_{S_5} \nonumber\\
     & - \frac{\eta\eta_g K}{2}\|\nabla f(x_{t})\|^2 - \frac{\eta\eta_g}{2K}\mathbb{E}_t [\|\frac{1}{N} \sum_{i\in [N]}\sum_{k=0}^{K-1} \nabla f_i(x_{t,k}^i)\|^2]. \label{ineq_s3}
\end{align}
Then we can bound $S_5$ as:
\begin{align}
    &S_5  \leq \underbrace{\frac{\eta\eta_g L^2}{2N}\sum_{i\in [N]}\sum_{k=0}^{K-1} \mathbb{E}_t [\| x_{t,k}^i - x_{t}\|^2] }_{S_6}
\end{align}
In FedCLG-S, the correction step occurs during server aggregation (from $x_t$ to $x_{t+1}^s$). Therefore, the difference between the local model $x_{t,k}^i$ and the initial model $x_t$ is the same as $T_6$, which was previously defined in the proof of Theorem 1. 
\begin{align}
    & S_6  \leq \frac{\eta^3\eta_g L^2 K^2}{2} \sum_{k=0}^{K-1} \underbrace{\frac{1}{N}\sum_{i\in [N]}\mathbb{E}_t [\|\nabla f_i(x_{t, k}^i)\|^2]}_{S_7} \nonumber\\
    & + \frac{\eta^3\eta_g L^2 K^2 \sigma_l^2}{2}.
    \label{ineq_s6}
\end{align}
To further bound $S_7$ which is corresponding to $T_7$, we have:
\begin{align}
    & S_7 \leq \frac{3 L^2}{N} \sum_{i\in [N]} \mathbb{E}_t [\| x_{t} - x_{t, k}^i\|^2] + 3 \sigma_g^2 + 3\mathbb{E}_t [\|\nabla f(x_{t})\|^2], 
\end{align}
Substituting the result of $S_7$ to the inequality of $S_6$, we have:
\begin{align}
    & S_6 \leq \frac{3\eta^3\eta_g L^2 K^2 (\sigma_l^2+3K\sigma_g^2)}{4} + \frac{\eta\eta_g K}{4}\mathbb{E}_t [\|\nabla f(x_{t})\|^2].
\end{align}
Then we have $S_3$ be bounded as:
\begin{align}
    &S_3\leq \frac{3\eta^3\eta_g L^2 K^2 (\sigma_l^2+3K\sigma_g^2)}{4}  - \frac{\eta\eta_g K}{4}\|\nabla f(x_{t})\|^2 \nonumber\\
    &- \frac{\eta\eta_g}{2K}\mathbb{E}_t [\|\frac{1}{N} \sum_{i\in [N]}\sum_{k=0}^{K-1} \nabla f_i(x_{t,k}^i)\|^2].
\end{align}
}
The remaining term in $S_1$ is $S_4$ can be bounded as:
\begin{align}
    & S_4  = \underbrace{\frac{\gamma}{2E} \mathbb{E}_t[\|\sum_{e=0}^{E-1} \nabla f(x_{t+1,e}^s) - E\nabla f(x_t)\|^2]}_{S_8} \nonumber\\
    &- \frac{\gamma E}{2}\|\nabla f(x_{t})\|^2 - \frac{\gamma}{2E}\mathbb{E}_t[\|\sum_{e=0}^{E-1} \nabla f(x_{t+1,e}^s)\|^2]. \label{ineq_s4}
\end{align}
\textcolor{black}{
Then the term $S_8$ is bounded as:
\begin{align}
    & S_8 \leq \underbrace{\frac{\gamma L^2}{2}\sum_{e=0}^{E-1} \mathbb{E}_t[\| x_{t+1,e}^s - x_t\|^2]}_{S_9} \nonumber\\
    & \leq \underbrace{{\gamma L^2}\sum_{e=0}^{E-1} \mathbb{E}_t[\| x_{t+1,e}^s - x_{t+1}^s\|^2]}_{S_{10}}  + \frac{2\eta^2 \eta_g^2\gamma E L^2 K \sigma_l^2}{M}\nonumber\\ 
    & +\eta^2 \eta_g^2\gamma E L^2 \nonumber\\
    & \underbrace{\mathbb{E}_t [\|\frac{1}{M} \sum_{i\in \mathcal{S}_t}\sum_{k=0}^{K-1}\nabla f_i(x_{t,k}^i)+K(\nabla f_s(x_t)-\nabla f_i(x_t)) \|^2]}_{S_{11}}\nonumber\\
\end{align}
}
\textcolor{black}{
The bounds of $S_{10}$ and $S_{11}$ are:
\begin{align}
    & S_{10} \leq \frac{3\gamma^3 E^2 L^2 \sigma^2}{m_s} + 3\gamma^3E^2 L^4\sum_{e=0}^{E-1}\mathbb{E}_t[\|x_{t+1, e}^s -  x_{t}\|^2] \nonumber\\
    & + 3\gamma^3E^2 L^2\sum_{e=0}^{E-1}\mathbb{E}_t[\| \nabla f(x_{t})\|^2],
\end{align}
and 
\begin{align}
    & S_{11} \leq \frac{9K^2}{2}\mathbb{E}_t [\|\nabla f(x_{t})\|^2] +  \frac{9\eta^2 L^2 K^3 (\sigma_l^2+3K\sigma_g^2) }{2} + \frac{3K\sigma^2}{M m_s},
\end{align}}
\textcolor{black}{
Substituting the results of $S_{10}$ and $S_{11}$, we have:
\begin{align}
    &S_{9}  \leq  \frac{18\gamma^3 E^2 L^2 \sigma^2}{5m_s} + \frac{\gamma E}{10} \mathbb{E}_t[\| \nabla f(x_{t})\|^2] \nonumber\\
    & +\frac{9\eta^4\eta_g^2 L^3 K^3(\sigma_l^2+3K\sigma_g^2)}{10} + \frac{9\eta^2 \eta_g^2LK^2}{10}\mathbb{E}_t [\|\nabla f(x_{t})\|^2]  \nonumber\\
    &  + \frac{3\eta^2 \eta_g^2 L K\sigma^2}{5M m_s} + \frac{2\eta^2 \eta_g^2 L K \sigma_l^2}{5M},
\end{align}
Then we have $S_4$ as:
\begin{align}
    & S_4 \leq  \frac{18\gamma^3 E^2 L^2 \sigma^2}{5m_s} - \frac{2\gamma E}{5} \mathbb{E}_t[\| \nabla f(x_{t})\|^2] \nonumber\\
    & + \frac{9\eta^4\eta_g^2 L^3 K^3(\sigma_l^2+3K\sigma_g^2)}{10}  + \frac{9\eta^2 \eta_g^2LK^2}{10}\mathbb{E}_t [\|\nabla f(x_{t})\|^2] \nonumber\\
    & +  \frac{3\eta^2 \eta_g^2 L K\sigma^2}{5M m_s}  - \frac{\gamma}{2E}\mathbb{E}_t[\|\sum_{e=0}^{E-1} \nabla f(x_{t+1,e}^s)\|^2]  + \frac{2\eta^2 \eta_g^2 L K \sigma_l^2}{5M}.
\end{align}
Combing the inequalities of $S_3$ and $S_4$, we can then bound $S_1$ as:
\begin{align}
    &S_1 \leq \frac{3\eta^3\eta_g L^2 K^2 (\sigma_l^2+3K\sigma_g^2)}{4} - (\frac{2\gamma E}{5}+\frac{\eta\eta_g K}{4})\|\nabla f(x_{t})\|^2 \nonumber\\
    & - \frac{\eta\eta_g}{2K}\mathbb{E}_t [\|\frac{1}{N} \sum_{i\in [N]}\sum_{k=0}^{K-1} \nabla f_i(x_{t,k}^i)\|^2]  + \frac{18\gamma^3 E^2 L^2 \sigma^2}{5m_s}\nonumber\\
    & + \frac{9\eta^4\eta_g^2 L^3 K^3(\sigma_l^2+3K\sigma_g^2)}{10}   + \frac{9\eta^2 \eta_g^2LK^2}{10}\mathbb{E}_t [\|\nabla f(x_{t})\|^2] \nonumber\\
    & +  \frac{3\eta^2 \eta_g^2 L K\sigma^2}{5M m_s} - \frac{\gamma}{2E}\mathbb{E}_t[\|\sum_{e=0}^{E-1} \nabla f(x_{t+1,e}^s)\|^2] + \frac{2\eta^2 \eta_g^2 L K \sigma_l^2}{5M}.
\end{align}
}
\textcolor{black}{
Utilizing the similar steps in the proof of Thm1 and Thm2, we can bound $S_2$ as:
\begin{align}
    &S_2 \leq \frac{9\eta^4 \eta_g^2 L^3 K^3(3K\sigma_g^2 + \sigma_l^2) }{2} + \frac{9K^2\eta^2\eta_g^2L}{2}\mathbb{E}_t [\|\nabla f(x_{t})\|^2] \nonumber\\
    & + \frac{3\eta^2\eta_g^2KL\sigma^2}{M m_s} + \frac{ \gamma^2L E \sigma^2}{m_s} + \gamma^2L\mathbb{E}_t [\|\sum_{e=0}^{E-1} \nabla f(x_{t+1,e}^s)\|^2] \nonumber\\
    &  + \frac{2\eta^2\eta_g^2LK\sigma_l^2}{M}.
\end{align}
Finally, we can get the convergence bound for FedCLG-S as:
\begin{align}
    & \min_{t \in [T]} \mathbb{E}\|\nabla f(x_t)\|_2^2 = \mathcal{O}\bigg(\frac{(f_0 - f_*)}{T(\gamma E + \eta\eta_g K)}\bigg)\ \nonumber\\
    & + \mathcal{O}\bigg(\frac{\eta^3\eta_g L^2 K^3\sigma_g^2}{\gamma E + \eta\eta_g K}\bigg) + \mathcal{O}\bigg(\frac{\gamma^2L E \sigma^2}{m_s(\gamma E + \eta\eta_g K)} \bigg) \nonumber\\
    & + \mathcal{O}\bigg(\frac{\eta^2\eta_g^2KL\sigma^2}{M m_s(\gamma E + \eta\eta_g K)} \bigg) + \mathcal{O}\bigg(\frac{\eta^3\eta_g L^2 K^2\sigma_l^2}{\gamma E + \eta\eta_g K}\bigg) \nonumber\\
    & +\mathcal{O}\bigg(\frac{\eta^2\eta_g^2LK \sigma_l^2}{M(\gamma E + \eta\eta_g K)}\bigg) ,
\end{align}
where $f_0 = f(x_0)$, $f_* = f(x_*)$ and  $\gamma \leq \frac{1}{6EL}$, $\eta \leq \frac{1}{3KL}$ and $\eta \eta_g \leq \frac{1}{27KL}$. 
}
% Then let $\eta = \Theta(\frac{1}{K\sqrt{T}})$, $\eta_g = \Theta(\sqrt{MK})$ and $\gamma = \Theta(\frac{1}{\sqrt{ET}})$, we have:
% \begin{align}
%      & \min_{t \in [T]} \mathbb{E}\|\nabla f(x_t)\|_2^2 \leq \mathcal{O}\bigg(\frac{1}{(\sqrt{MK}+\sqrt{E})\sqrt{T}}\bigg) \nonumber\\
%      & + \mathcal{O}\bigg(\frac{\sqrt{MK}}{(\sqrt{MK}+\sqrt{E})T}\bigg)
% \end{align}

\end{document}